\documentclass[11pt]{article}

\usepackage{fullpage}
\usepackage{amsmath}
\usepackage{amsfonts}
\usepackage{amssymb}
\usepackage{amsthm}
\usepackage{mathrsfs}
\usepackage{xspace}
\usepackage{booktabs}
\usepackage{array}
\usepackage{longtable}
\usepackage[dvipsnames]{xcolor}
\usepackage{hyperref}
\usepackage[nottoc]{tocbibind}

\numberwithin{equation}{section}
\allowdisplaybreaks

\newtheorem{theorem}{Theorem}
\newtheorem{lemma}{Lemma}
\newtheorem{proposition}{Proposition}
\newtheorem{corollary}{Corollary}

\theoremstyle{definition}
\newtheorem{definition}{Definition}

\theoremstyle{remark}

\newcommand{\SC}{\operatorname{SC}}
\newcommand{\OPT}{\operatorname{OPT}}
\newcommand{\CMP}{\mathsf{CMP}}
\newcommand{\CM}{\mathsf{CM}}
\newcommand{\R}{\mathbb{R}}
\newcommand{\one}{\mathbf{1}}

\title{Learning-Augmented Strategyproof Facility Location in $\mathbb{R}^d$ with $\ell_p$ Distances}

\author{Hau Chan$^{1}$ \qquad Jianan Lin$^{2}$ \qquad Chenhao Wang$^{3,4}$\\[0.6em]
\small $^{1}$University of Nebraska--Lincoln\\
\small $^{2}$Rensselaer Polytechnic Institute\\
\small $^{3}$Beijing Normal University--Zhuhai\\
\small $^{4}$Beijing Normal--Hong Kong Baptist University}

\date{}

\begin{document}
\maketitle

\begin{abstract}
We study learning-augmented mechanism design for locating a single facility in $\mathbb{R}^d$ to minimize the sum of the agents' $\ell_p$ distances to the facility.
We analyze the coordinate-wise median with predictions (CMP) mechanism, which adds $cn$ virtual agents at a predicted optimal facility location to the $n$ reported locations and returns their coordinate-wise median.
The parameter $c\in[0,1)$ represents confidence in the prediction; CMP is known to be strategyproof when both are fixed independently of the reports.
We determine its approximation guarantees under correct predictions (consistency) and arbitrary predictions (robustness) in two settings.
First, for $d=2$, we establish exact guarantees for every $p\in[1,+\infty]$.
For $1<p<+\infty$, the consistency and robustness are respectively
$\Bigl(1+\bigl(\frac{1-c}{1+c}\bigr)^{\frac{p}{p-1}}\Bigr)^{\frac{p-1}{p}}$ and
$\Bigl(1+\bigl(\frac{1+c}{1-c}\bigr)^{\frac{p}{p-1}}\Bigr)^{\frac{p-1}{p}}$.
We use the median condition in each coordinate to compare the mechanism's cost with the optimal cost, and establish tightness using instances with agents at only three distinct locations.

Second, for $1<p<+\infty$, we obtain dimension-independent consistency and robustness upper bounds valid for every $d\ge1$.
For each fixed $p$ and $c$, we construct families of instances whose ratios approach the respective upper bounds as $d\to\infty$, proving asymptotic tightness.
We also establish exact guarantees for $p=1$ in every dimension and asymptotically tight bounds for $p=+\infty$.
Our high-dimensional results recover the prediction-free bounds of Gravin and Jia (STOC 2025) when $c=0$ and their learning-augmented bounds in arbitrary-dimensional Euclidean spaces when $p=2$.
\end{abstract}

\section{Introduction}\label{sec:introduction}

Approximate mechanism design without money studies optimization problems in which part of the input is privately held by strategic agents and monetary transfers are unavailable \cite{procaccia2013approximate}.
Agents may misreport this information to obtain outcomes that better serve their own interests.
The goal is therefore to design strategyproof mechanisms, under which no agent benefits from misreporting, while providing good approximation guarantees for the optimization objective.
Problems studied in this framework include facility location \cite{procaccia2013approximate}, scheduling \cite{koutsoupias2014scheduling}, and one-sided matching \cite{bhalgat2011social}.

\paragraph{Strategyproof Facility Location.}
Facility location is a classical optimization problem in operations research, economics, and computer science, and a motivating case study for approximate mechanism design without money \cite{drezner2004facility,procaccia2013approximate,chan2021mechanismsurvey}.
A social planner chooses a location for a public facility, such as a school, park, or library, based on the locations reported by a set of agents.
Each agent incurs the distance from her true location to the facility and may benefit from misreporting if doing so moves the facility closer to her.
Under the utilitarian social-cost objective, the planner seeks to minimize the sum of these distances while ensuring strategyproofness.
Beyond physical facilities, the model also captures preference aggregation, such as choosing a collective policy position from committee members' preferred positions \cite{black1948rationale,moulin1980strategy}.

\paragraph{Multi-Dimensional Facility Location.}
The classical one-dimensional problem provides an important benchmark.
Locating the facility at a median report is strategyproof and minimizes the utilitarian social cost \cite{moulin1980strategy,procaccia2013approximate}.
In $\mathbb{R}^d$ with $\ell_p$ distances, where $d\ge1$ and $p\in[1,+\infty]$, the coordinate-wise median (CM) mechanism takes a median of the reports in each coordinate.
This mechanism remains strategyproof and is optimal for $p=1$, but need not minimize the utilitarian social cost when $d\ge2$ and $p>1$.
For Euclidean distance ($p=2$), Meir \cite{DBLP:conf/sagt/Meir19} proved a $\sqrt d$ approximation bound.
Gravin and Jia \cite{gravin2025approximation} improved this to the dimension-independent bound $\sqrt{6\sqrt3-8}\approx1.55$ and established its asymptotic tightness as $d\to\infty$.
They also obtained dimension-independent, asymptotically tight bounds for general $\ell_p$ distances.
For $d=2$, Hastings \cite{hastings2026strategic} determined the exact CM approximation ratio under the utilitarian objective for every $\ell_p$ distance.
Beyond the utilitarian objective, Chan, Lin, and Wang \cite{chan2026strategyproof} and Hastings \cite{hastings2026strategic} also analyzed CM when agents' costs are aggregated by a norm, with Euclidean distances in the former work and general $\ell_p$ distances in the latter.
These results quantify the worst-case loss of using CM without additional information about an optimal facility location.

\paragraph{Learning-Augmented Facility Location.}
The learning-augmented approach uses predictions, derived for example from historical data or domain knowledge, to improve performance while retaining worst-case guarantees when the predictions are inaccurate \cite{lykouris2021competitive,mitzenmacher2022algorithms}.
Agrawal et al.\ \cite{agrawal2022learning} introduced this approach to strategic facility location, where a mechanism receives a prediction of an optimal facility location.
Its approximation guarantee when the prediction is correct is called \emph{consistency}, and its guarantee under arbitrary predictions is called \emph{robustness}.
The challenge is to benefit from accurate predictions while preserving both strategyproofness and bounded robustness.

For the utilitarian objective in the Euclidean plane ($d=2$ and $p=2$), Agrawal et al.\ \cite{agrawal2022learning} proposed the coordinate-wise median with predictions (CMP) mechanism and determined its exact consistency and robustness.
Given $n$ reported locations and a confidence parameter $c\in[0,1)$, CMP adds $cn$ virtual agents at the predicted location and returns the coordinate-wise median.
The mechanism is strategyproof when the prediction and $c$ are fixed independently of the reports.
When $c=0$, it reduces to CM, whose consistency and robustness are both $\sqrt2$ in this setting; increasing $c$ improves consistency at the expense of robustness.
Gravin and Jia \cite{gravin2025approximation} subsequently obtained consistency and robustness guarantees for the same mechanism in arbitrary-dimensional Euclidean spaces ($p=2$).
These results motivate extending the analysis beyond Euclidean distance: although prediction-free CM has been analyzed for general $\ell_p$ distances, the corresponding consistency and robustness of CMP remain to be determined.
We therefore ask:

\begin{quote}
\emph{What consistency and robustness does the coordinate-wise median with predictions mechanism achieve in $\mathbb{R}^d$ with $\ell_p$ distances under the utilitarian social-cost objective?}
\end{quote}

\subsection{Our Contributions}\label{subsec:contributions}

We denote the CMP mechanism with confidence parameter $c\in[0,1)$ by $\CMP_c$.
Our two main contributions are as follows.

\paragraph{Tight Bounds in Two Dimensions.}
In $\mathbb{R}^2$ with $\ell_p$ distances, we determine the exact consistency and robustness of $\CMP_c$, denoted by $A_p(c)$ and $B_p(c)$, respectively.
For $1<p<+\infty$, they are
\begin{align*}
 A_p(c)=\left(1+\left(\frac{1-c}{1+c}\right)^{\frac{p}{p-1}}\right)^{\frac{p-1}{p}},
 \qquad
 B_p(c)=\left(1+\left(\frac{1+c}{1-c}\right)^{\frac{p}{p-1}}\right)^{\frac{p-1}{p}}.
\end{align*}
We also prove the endpoint values $A_1(c)=1$, $B_1(c)=\frac{1+c}{1-c}$, $A_\infty(c)=\frac{2}{1+c}$, and $B_\infty(c)=\frac{2}{1-c}$.
For $p=2$, these expressions coincide with the planar Euclidean guarantees of Agrawal et al.\ \cite{agrawal2022learning}.
When $c=0$, the mechanism ignores the prediction and reduces to CM; both guarantees then recover its exact planar approximation ratio $2^{1-\frac{1}{p}}$ for finite $p$, and $2$ for $p=+\infty$ \cite{hastings2026strategic}.

For the upper bounds, we divide the agents into four groups according to which side of the mechanism's output they lie on in each coordinate.
Within each group, H\"older's inequality relates the agents' costs at the output to their costs at an optimal facility.
The median condition in each coordinate limits the number of agents on either side; combining these restrictions with the cost inequalities gives the consistency and robustness upper bounds.
To prove tightness, we construct finite instances with agents at at most three distinct locations whose cost ratios attain or approach the respective bounds arbitrarily closely.
Thus neither guarantee can be improved for this mechanism.

\paragraph{Asymptotically Tight Bounds in High Dimensions.}
For $1<p<+\infty$, we prove consistency and robustness upper bounds that hold in $\mathbb{R}^d$ for every $d\ge1$ and do not depend on $d$.
To state these bounds, for $\mu\in(0,1)$ define
\begin{align*}
 \Gamma_p(\mu)=\left(1+\left[\max_{0\le a\le\mu}\frac{a^{1/p}+(\mu-a)/(1-\mu)}{(1-a)^{1/p}}\right]^{\frac{p}{p-1}}\right)^{\frac{p-1}{p}}.
\end{align*}
Substituting $\mu=\frac{1-c}{2}$ and $\mu=\frac{1+c}{2}$ gives the consistency upper bound $\Gamma_p\bigl(\frac{1-c}{2}\bigr)$ and the robustness upper bound $\Gamma_p\bigl(\frac{1+c}{2}\bigr)$, respectively.
For each fixed $p$ and $c$, we construct families of finite instances whose corresponding cost ratios converge to these bounds as $d\to\infty$, proving asymptotic tightness.
For $p=1$, the exact consistency and robustness are $1$ and $\frac{1+c}{1-c}$ in every dimension.
For $p=+\infty$, we obtain the asymptotically tight consistency and robustness upper bounds $\frac{3+c}{1+c}$ and $\frac{3-c}{1-c}$, respectively.
When $c=0$, our results recover the prediction-free, dimension-independent bounds of Gravin and Jia \cite{gravin2025approximation} for general $\ell_p$ distances; when $p=2$, they recover their learning-augmented guarantees in arbitrary-dimensional Euclidean spaces.

For the upper bounds, we associate a number with each agent that records which coordinates place her on the same side of the output as an optimal facility, weighted by that facility's displacement in each coordinate.
The coordinate-wise median conditions bound the average of these numbers, allowing us to compare social costs through a function of a single variable.
A convexity argument reduces the worst-case comparison to distributions on at most two values and yields the maximization in $\Gamma_p$.
For the matching lower bounds, we construct instances that are symmetric under permutations of the coordinates and whose cost ratios approach the upper bounds as the dimension grows.

\subsection{Additional Related Work}\label{subsec:related}

Moulin \cite{moulin1980strategy} characterized generalized median voter schemes on one-dimensional single-peaked domains.
Border and Jordan \cite{border1983} studied multidimensional characterizations under separable preferences, relating strategyproof mechanisms satisfying unanimity to coordinate-wise rules with phantom voters.
Procaccia and Tennenholtz \cite{procaccia2013approximate} developed approximate mechanism design without money, using facility location to study the efficiency loss imposed by strategyproofness.
For a broader account of facility-location mechanism design and its extensions, we refer readers to the survey of Chan et al.\ \cite{chan2021mechanismsurvey}.

\paragraph{Strategyproof Facility Location without Predictions.}
The literature extends the classical single-facility problem along several directions.
For multiple facilities, where agents are served by their nearest facility, Fotakis and Tzamos studied concave connection costs and winner-imposing mechanisms \cite{fotakis2013strategyproof,fotakis2013winner}.
Recent work develops improved randomized mechanisms for two and more facilities \cite{ma2026breaking,aziz2026anchoring,chan2026randomized,jia2026product}.
Other studies change the location space, examining strategyproof mechanisms on networks \cite{schummer2002strategy,alon2010strategyproof}, discrete lines and cycles \cite{dokow2012mechanism}, and approximation guarantees on cycles \cite{rogowski2025improved}.
Preference variants include obnoxious facilities, which agents prefer to be farther away from, and hybrid models combining attractive and obnoxious preferences \cite{cheng2013strategy,Ibara2012characterize,feigenbaum2015strategyproof,chan2025obnoxious}.
Capacity-constrained models additionally limit how many agents each facility can serve \cite{aziz2020facility,aziz2020capacity}.
Alternative objectives include least-squares cost \cite{feldman2013strategyproof}, norm-based social costs on the line \cite{feigenbaum2017approximately}, and equitable outcomes \cite{walsh2025equitable}.
These variants change the service rule, location space, preferences, or objective; our setting retains a single attractive facility and the utilitarian objective in $\mathbb{R}^d$.

\paragraph{Learning-Augmented Facility Location.}
Agrawal et al.\ \cite{agrawal2022learning} and Xu and Lu \cite{Xu2022} introduced predictions into facility-location mechanism design.
For maximum cost, Balkanski et al.\ \cite{balkanski2024randomized} studied randomized mechanisms under different prediction models, while Chan, Lin, and Wang \cite{chan2025prediction} considered the plane with general $\ell_p$ distances.
Other extensions include candidate facility locations \cite{DBLP:conf/tamc/FangFLN24}, obnoxious facilities \cite{DBLP:journals/corr/abs-2212-09521}, and the envy-ratio objective \cite{nips2025envy}.
The information supplied to a mechanism also varies across models.
Barak, Gupta, and Talgam-Cohen \cite{DBLP:conf/nips/Barak0T24} studied mostly and approximately correct (MAC) predictions of agents' locations, allowing some predictions to be arbitrarily wrong and bounded errors in the rest.
Christodoulou, Sgouritsa, and Vlachos \cite{DBLP:conf/nips/000SV24} studied mechanisms augmented with output recommendations, measuring performance through the quality of the recommended outcome.
For utilitarian facility location, the planar Euclidean analysis of Agrawal et al.\ \cite{agrawal2022learning} and the high-dimensional Euclidean analysis of Gravin and Jia \cite{gravin2025approximation} are the closest to our work.
We retain their prediction of an optimal facility location and extend the analysis of CMP to general $\ell_p$ distances.

\section{Preliminaries}\label{sec:preliminaries}

\subsection{Settings}

We study the problem of locating a single facility in $\ell_p(\R^d)$, where $d\ge1$ and $p\in[1,+\infty]$.
For every finite population size $n\ge1$, let $N=\{1,\ldots,n\}$ be the set of agents.
Agent $i\in N$ has a preferred location $x_i=(x_{i1},\ldots,x_{id})\in\R^d$, and $\mathbf{x}=(x_1,\ldots,x_n)\in(\R^d)^n$ denotes the location profile.
The distance between two points $x,y\in\R^d$ is measured by the $\ell_p$ norm, given by
\begin{align*}
 \|x-y\|_p=\biggl(\sum_{j=1}^d|x_j-y_j|^p\biggr)^{\frac{1}{p}}
 \qquad\text{for }1\le p<+\infty.
\end{align*}
In particular, for $p=+\infty$, we use the limiting norm $\|x-y\|_\infty=\max_{1\le j\le d}|x_j-y_j|$.
The choices $p=1$, $p=2$, and $p=+\infty$ correspond to the sum of the coordinate distances, Euclidean distance, and the maximum coordinate distance, respectively.

Given a facility location $y\in\R^d$, the cost incurred by agent $i$ is her distance to the facility, namely $\|x_i-y\|_p$.
Our objective is to minimize the \emph{utilitarian social cost}, defined by
\begin{align*}
 \SC_p(\mathbf{x},y)=\sum_{i\in N}\|x_i-y\|_p.
\end{align*}
The optimal social cost on profile $\mathbf{x}$ is
\begin{align*}
 \OPT_p(\mathbf{x})=\min_{y\in\R^d}\SC_p(\mathbf{x},y).
\end{align*}
We write $f\in\arg\min_{y\in\R^d}\SC_p(\mathbf{x},y)$ for an optimal facility location, so that $\SC_p(\mathbf{x},f)=\OPT_p(\mathbf{x})$.
An optimal facility need not be unique.

In the learning-augmented setting, the mechanism additionally receives a prediction $\widehat f\in\R^d$ of an optimal facility location. 
The prediction may be inaccurate, and we impose no probabilistic assumption on its error.
A deterministic mechanism specifies, for each population size $n$, a function $M_n:(\R^d)^n\times\R^d\to\R^d$ that maps the reported profile and the prediction to a facility location.
We suppress the subscript $n$ when the population size is clear and write $M(\mathbf{x},\widehat f)$ for the output.
The prediction is exogenous: it is held fixed when an agent changes her report.

Agents may strategically report locations different from their true locations in order to obtain a facility closer to themselves.
For an agent $i\in N$ and an alternative report $x_i'\in\R^d$, let $(x_i',\mathbf{x}_{-i})$ denote the profile obtained from $\mathbf{x}$ by replacing only agent $i$'s report with $x_i'$.
A mechanism $M$ is \emph{strategyproof} if, for every profile $\mathbf{x}$, prediction $\widehat f$, agent $i\in N$, and alternative report $x_i'$, we have
\begin{align*}
 \bigl\|x_i-M(\mathbf{x},\widehat f)\bigr\|_p
 \le \bigl\|x_i-M((x_i',\mathbf{x}_{-i}),\widehat f)\bigr\|_p.
\end{align*}
Thus, truthful reporting is a weakly dominant strategy, regardless of the accuracy of the prediction.
A mechanism is \emph{anonymous} if permuting the agents' reports leaves its output unchanged, with the prediction fixed.
It is \emph{unanimous} if it returns $a$ whenever all agents report the same point $a\in\R^d$, for every prediction.

We evaluate a mechanism by comparing its social cost $\SC_p(\mathbf{x},M(\mathbf{x},\widehat f))$ with $\OPT_p(\mathbf{x})$.
For profiles with positive optimal cost, their ratio is the approximation ratio on that instance.
The two performance guarantees of interest are \emph{consistency} and \emph{robustness}. 
A mechanism is $\alpha$-consistent if
\begin{align*}
 \SC_p(\mathbf{x},M(\mathbf{x},\widehat f))\le\alpha\OPT_p(\mathbf{x})
\end{align*}
whenever $\widehat f$ is an optimal facility.
This requirement applies to every optimal prediction when the optimal facility is not unique.
A mechanism is $\beta$-robust if
\begin{align*}
 \SC_p(\mathbf{x},M(\mathbf{x},\widehat f))\le\beta\OPT_p(\mathbf{x})
\end{align*}
for every prediction $\widehat f\in\R^d$.
Both guarantees must hold uniformly over all finite population sizes and all location profiles in the specified space.
Consistency measures performance with a correct prediction, whereas robustness bounds the loss even when the prediction is arbitrarily inaccurate.

Any mechanism with finite robustness must be unanimous.
Indeed, when all agents are located at $a$, the optimal social cost is zero, and the robustness inequality forces the mechanism to return $a$.
Conversely, $\OPT_p(\mathbf{x})=0$ only when all agents share a location.
The mechanism studied below is unanimous, so we restrict the suprema defining its approximation ratios to profiles with positive optimal cost.

\subsection{CMP Mechanism}

The \emph{coordinate-wise median} mechanism $\CM$ returns the median of the reports in each coordinate, choosing the lower of the two middle values when their number is even.
The coordinate-wise median with predictions incorporates a prediction by adding virtual agents at the predicted location \cite{agrawal2022learning}.

\begin{definition}[Coordinate-Wise Median with Predictions]\label{def:cmp}
Fix a confidence parameter $c\in[0,1)$ independently of the reports.
Given a profile $\mathbf{x}$ of $n$ agents and a prediction $\widehat f$, the mechanism $\CMP_c$ adds $cn$ virtual agents at $\widehat f$ and returns the coordinate-wise median of the augmented profile, using the lower median in each coordinate.
\end{definition}

For arbitrary $cn$, the virtual agents represent total weight $cn$ at $\widehat f$, while each real agent has weight one; the lower median is the first coordinate value whose cumulative weight reaches at least $\frac{(1+c)n}{2}$.
Virtual agents are not included in the social cost.
When $c=0$, we recover $\CMP_0=\CM$; increasing $c$ gives the prediction more weight.
Since $c<1$, real agents at a common location form a strict majority of the augmented profile, so $\CMP_c$ is unanimous.

For fixed $p$, $d$, and $c$, we denote the exact consistency and robustness of $\CMP_c$ by
\begin{align*}
 \alpha_{p,d}(c)&=\sup_{\mathbf{x},\,\widehat f\in\arg\min_y\SC_p(\mathbf{x},y)}
 \frac{\SC_p(\mathbf{x},\CMP_c(\mathbf{x},\widehat f))}{\OPT_p(\mathbf{x})},\\
 \beta_{p,d}(c)&=\sup_{\mathbf{x},\,\widehat f\in\R^d}
 \frac{\SC_p(\mathbf{x},\CMP_c(\mathbf{x},\widehat f))}{\OPT_p(\mathbf{x})}.
\end{align*}
Both suprema range over all finite population sizes and profiles with $\OPT_p(\mathbf{x})>0$.

\begin{proposition}\label{prop:sp}
The mechanism $\CMP_c$ is deterministic, anonymous, and strategyproof.
\end{proposition}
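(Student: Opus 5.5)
Determinism and anonymity follow directly from Definition~\ref{def:cmp}. For fixed $n$, $c$, and $\widehat f$, the output in coordinate $j$ is the lower weighted median of the multiset $\{x_{1j},\ldots,x_{nj}\}$ together with weight $cn$ at $\widehat f_j$. This is a well-defined function of the reports, so the mechanism is deterministic. It depends only on the multiset of reported values in each coordinate, not on which agent reported which value, so permuting the agents leaves every coordinate of the output unchanged. That gives anonymity.

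For strategyproofness, I reduce to a one-dimensional statement and then use the separability of $\ell_p$ costs. The key one-dimensional claim concerns the $j$-th output coordinate $y_j$ as a function of agent $i$'s report $x'_{ij}$, with everything else fixed (the others' reports, the prediction, and the weight $cn$). I claim it is a generalized median: there are values $L_j\le U_j$, depending only on the other data, such that $y_j=\operatorname{med}(L_j,x'_{ij},U_j)$.

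I would verify this directly from the cumulative-weight characterization. Remove agent $i$ and let $L_j$ be the smallest value whose cumulative weight among the remaining points reaches $\frac{(1+c)n}{2}-1$. Let $U_j$ be the smallest value whose cumulative weight reaches $\frac{(1+c)n}{2}$. Then there are three cases.
\begin{itemize}
\item If $x'_{ij}\le L_j$, agent $i$'s unit weight is counted by the time we reach $L_j$, so the threshold is first reached at $L_j$.
\item If $x'_{ij}\ge U_j$, agent $i$'s weight does not help before $U_j$, so the output is $U_j$.
\item If $L_j<x'_{ij}<U_j$, the threshold is reached exactly at $x'_{ij}$.
\end{itemize}
In all three cases the lower median equals the clamp of $x'_{ij}$ to $[L_j,U_j]$. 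This case check, with fractional weight $cn$, is the only step requiring care, and it is where I would spend the detail.

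Given the claim, truthful reporting is a best response. Truth gives $y_j=\operatorname{med}(L_j,x_{ij},U_j)$, the point of $[L_j,U_j]$ closest to $x_{ij}$. Any misreport yields some $y'_j\in[L_j,U_j]$. Hence $|x_{ij}-y_j|\le|x_{ij}-y'_j|$ simultaneously for every coordinate $j$, since $L_j,U_j$ do not depend on agent $i$'s report in any coordinate. Because $\|\cdot\|_p$ is monotone in the absolute values of the coordinates for every $p\in[1,+\infty]$, including the max for $p=+\infty$, we get $\|x_i-y\|_p\le\|x_i-y'\|_p$. This is exactly the strategyproofness inequality. The argument uses that $\widehat f$ and $c$ are exogenous, which is what the definition of strategyproofness assumes.
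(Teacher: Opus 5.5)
Your proof is correct, and its skeleton matches the paper's. Each coordinate of $\CMP_c$ is a one-dimensional median rule with fixed phantom weight $cn$ at $\widehat f_j$. A misreport cannot bring any output coordinate closer to the agent's true coordinate, and monotonicity of $\|\cdot\|_p$ in the coordinate magnitudes finishes the argument.

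The difference is in the one-dimensional step. The paper argues directly from cumulative weights: if $x_{ij}<m$, a misreport cannot push the lower median below $m$, and symmetrically if $x_{ij}>m$. You instead derive the explicit clamp representation $y_j=\operatorname{med}(L_j,x'_{ij},U_j)$, which is exactly what the paper's appeal to Moulin's generalized median rules asserts. Your version is slightly more informative, since it identifies the full interval $[L_j,U_j]$ of outputs agent $i$ can induce, and it makes the best-response property immediate, including for deviations in several coordinates at once.

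One boundary detail needs a convention. When $\frac{(1+c)n}{2}\le 1$ (that is, $n=1$, or $n=2$ and $c=0$), every real number meets the defining condition of $L_j$, so you must set $L_j=-\infty$. Taking the smallest remaining support point instead gives the wrong formula: for $n=2$, $c=0$ and the other agent at $b$, it yields $b$ rather than the true output $\min\{x'_{ij},b\}$. Similarly, when $n=1$ the remaining weight $c$ never reaches the threshold $\frac{1+c}{2}$, so $U_j=+\infty$. With these conventions your three cases and the projection argument go through verbatim.
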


\begin{proof}
For fixed $\widehat f$ and $c$, each coordinate is a generalized median rule with fixed virtual agents and is strategyproof on the line \cite{moulin1980strategy}.
The same argument applies when the virtual weight $cn$ is nonintegral.
To see this directly, fix a coordinate, write $a$ for the agent's true coordinate and $m$ for the truthful lower median, and hold the other reports fixed.
If $a<m$, the cumulative weight at $a$ is below the median threshold and cannot increase after a misreport, since the agent already contributes her full weight there.
Likewise, at any $y\in[a,m)$ the cumulative weight cannot increase, so the new median cannot be smaller than $m$.
If $a>m$, the other agents and the prediction already supply the threshold weight at $m$, so the new median cannot exceed $m$.
If $a=m$, her coordinate cost is already zero.
Thus, a misreport cannot reduce an agent's absolute distance to the output in any coordinate, and hence cannot reduce her $\ell_p$ cost.
Anonymity follows from invariance under permutations of the reports, and determinism follows from the lower-median convention.
\end{proof}

For the H\"older inequalities used below, we write $q=\frac{p}{p-1}$ for the conjugate exponent of $1<p<+\infty$, so that $\frac{1}{p}+\frac{1}{q}=1$.
At $p=1$ and $p=\infty$, we use $q=\infty$ and $q=1$, respectively.

\section{Tight Bounds in Two Dimensions}\label{sec:two-dimensional}

This section determines the exact consistency and robustness of $\CMP_c$ in the plane.

\begin{theorem}\label{thm:planar}
For every $c\in[0,1)$ and $p\in[1,+\infty]$, the coordinate-wise median with predictions in $\ell_p(\R^2)$ has exact consistency $A_p(c)$ and exact robustness $B_p(c)$, where, for $1<p<+\infty$,
\begin{align*}
 A_p(c)=\left(1+\left(\frac{1-c}{1+c}\right)^{\frac{p}{p-1}}\right)^{\frac{p-1}{p}},
 \qquad
 B_p(c)=\left(1+\left(\frac{1+c}{1-c}\right)^{\frac{p}{p-1}}\right)^{\frac{p-1}{p}}.
\end{align*}
At the endpoints,
\begin{align*}
 A_1(c)=1,\qquad B_1(c)=\frac{1+c}{1-c},\qquad
 A_\infty(c)=\frac{2}{1+c},\qquad B_\infty(c)=\frac{2}{1-c}.
\end{align*}
\end{theorem}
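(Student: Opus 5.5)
We will handle the upper bounds and the lower bounds separately. By translation and reflection invariance of $\ell_p$ we place the mechanism's output at the origin, $m=\CMP_c(\mathbf{x},\widehat f)=0$, and let $f=(f_1,f_2)$ be an optimal facility; reflecting coordinates we may assume $f_1,f_2\ge0$. Write $\mu=\frac{1-c}{2}$ in the consistency case and $\mu=\frac{1+c}{2}$ in the robustness case.

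\textbf{Median constraint.} The key combinatorial fact is that in each coordinate $j$ with $f_j>0$, the number of real agents with $x_{ij}\ge f_j$ (and even with $x_{ij}>0$) is at most $\mu n$. In the consistency case the $cn$ virtual agents sit at $\widehat f=f$, on the far side of $0$, so the real agents strictly above the lower median number at most $\frac{(1+c)n}{2}-cn=\frac{(1-c)n}{2}$. In the robustness case the prediction may sit at or below $0$, so the bound is only $\frac{(1+c)n}{2}$.

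\textbf{Upper bound, $1<p<\infty$.} Split agents into four groups by the signs of $x_{i1}$ and $x_{i2}$. For each agent we compare $\|x_i\|_p$ with $\|x_i-f\|_p$. Agents with $x_{ij}\le0$ in both coordinates satisfy $|x_{ij}|\le|x_{ij}-f_j|$, so they are no worse off at $0$. For the other agents we use the pointwise triangle-type bound $\|x_i\|_p\le\|x_i-f\|_p+\|v_i\|_p$, where $v_i$ is the vector of $f_j$ over the coordinates $j$ in which $x_i$ lies on $f$'s side. Summing, the excess $\SC_p(\mathbf{x},0)-\OPT_p$ is at most roughly $\mu n(f_1+f_2)$ by the median constraint, minus the gain of the agents with nonpositive coordinates. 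The latter is at least $(1-\mu)n f_j$ in coordinate $j$ when measured in the $\ell_1$ sense.

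To make this sharp in $\ell_p$ we use H\"older. For any $w=(w_1,w_2)\ge0$ with $\|w\|_q=1$, we have $\|z\|_p\ge w\cdot z$. Applying this to $\OPT_p=\sum_i\|x_i-f\|_p$ with a suitable $w$, and using the coordinate-wise medians (at $0$, each coordinate has at least $(1-\mu)n$ agents on the side $\le0$), we aim to show
\[
\OPT_p\ \ge\ \frac{\SC_p(\mathbf{x},0)}{\bigl(1+(\mu/(1-\mu))^{q}\bigr)^{1/q}}.
\]
Concretely we will reduce to the worst configuration. Agents in the ``both nonpositive'' group are moved to $0$. Agents on the positive side of exactly one coordinate are pushed to lie on that axis at distance $\ge f_j$. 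This reduction yields a three-point instance: mass $(1-\mu)$ at the origin, mass $\mu$ at a point, and an optimizer. Optimizing over $w$ gives the ratio $\bigl(1+(\mu/(1-\mu))^q\bigr)^{1/q}$. With $\mu/(1-\mu)=\frac{1-c}{1+c}$ this is $A_p(c)$, and with $\frac{1+c}{1-c}$ it is $B_p(c)$. I expect this reduction step, showing that the four-group contributions combine into exactly this H\"older-dual expression rather than a weaker sum, to be the main obstacle. The first thing to try is to write the ratio as a linear-fractional function of the group masses, so that the maximum sits at an extreme point where the median constraints are tight.

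\textbf{Endpoints.} For $p=1$ the objective separates by coordinate, and the one-dimensional generalized-median analysis gives consistency $1$ and robustness $\frac{1+c}{1-c}$. For $p=\infty$ we take limits $q\to1$, or argue directly with $\|z\|_\infty\ge\frac12(|z_1|+|z_2|)$, obtaining $1+\frac{1-c}{1+c}=\frac{2}{1+c}$ and $\frac{2}{1-c}$.

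\textbf{Tightness.} We use three locations. Place $\approx(1-\mu)n$ agents at the origin, $\approx\frac{\mu}{2}n$ at $(1,0)$, and $\approx\frac{\mu}{2}n$ at $(0,1)$, suitably scaled and perturbed so that the lower median is $0$ in both coordinates while the optimum is near $(t,t)$ for the H\"older-extremal $t$. For consistency the prediction is placed at that optimum; for robustness it is placed at $0$ or beyond. We then compute the ratio and let the perturbation vanish so that it approaches $A_p(c)$ or $B_p(c)$. If the optimum is not exactly at the constructed point, we instead choose asymmetric weights $(a,b)$ at the two axis points and verify the first-order optimality condition for $f$ directly.
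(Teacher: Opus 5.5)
Your median constraint is correct, and a linear program over the four group masses is indeed how the paper finishes. The upper-bound argument does not close, though, and the reduction you propose runs in the wrong direction. The per-agent quantity to control is $h(z)=\|z-f\|_p-\lambda\|z\|_p$, where $1/\lambda$ is the target ratio. On the side opposite to $f$, the output is \emph{not} where $h$ is smallest. For $f=(0,1)$ and $z=(-x,0)$, $h(z)=(x^p+1)^{1/p}-\lambda x$ is minimized at $x=(\lambda/\kappa)^{1/(p-1)}>0$ with value $\kappa=(1-\lambda^q)^{1/q}<1=h(0)$. Equivalently, the additive gain $\|z-f\|_p-\|z\|_p$ tends to $0$ as $x\to\infty$, so your $\ell_1$-style gain of $(1-\mu)f_j$ per coordinate has no $\ell_p$ counterpart. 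Moving such agents to the output therefore only makes the inequality easier, so it is not a worst-case reduction.

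A direct check confirms the failure. The instance your reduction ends with has weight $1-\mu$ at the output and $\mu$ at one other point, and its ratio is exactly $\max\{1,\frac{\mu}{1-\mu}\}$, because a point carrying at least half the weight is optimal by the triangle inequality. That is the $\ell_1$ value, strictly below $(1+(\frac{\mu}{1-\mu})^q)^{1/q}$.

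The missing ingredient is a per-agent lower bound on $h$ by a constant depending only on the sign group, with $\lambda^q+\kappa^q=1$ and $\kappa/\lambda=\frac{\mu}{1-\mu}$:
\begin{itemize}
\item $\kappa\|f\|_p$ for the nonpositive group, strengthened to $\kappa(f_1+f_2)$ when $\mu\le\frac12$; the strengthening is genuinely needed and requires a separate convexity argument over the rectangle between $0$ and $f+v$;
\item $\kappa f_2-\lambda f_1$ and $\kappa f_1-\lambda f_2$ for the two mixed groups;
\item $-\lambda\|f\|_p$ for the positive group.
\end{itemize}
These come from $(s+t)^p\ge s^p+t^p$ on the coordinates opposite to $f$, the triangle inequality, and H\"older against $(\lambda,\kappa)$. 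Only with these constants does your linear program under the two marginal constraints $\le\mu$ yield $\sum_iw_ih(x_i)\ge0$.

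The $p=\infty$ endpoint depends on the same step. Your direct estimate via $\|z\|_\infty\ge\frac12(|z_1|+|z_2|)$ gives only factor $2$ for consistency. You need the finite-$p$ inequality for an \emph{arbitrary} comparison point, since the optimum may move with $p$, and then you pass to the limit for a fixed instance.

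The lower-bound construction fails for the same reason: it puts weight at the output. With $\mu=\frac{1-c}{2}$, your $\approx(1-\mu)n$ agents at the output carry weight $\frac{1+c}{2}\ge\frac12$, so the output is itself optimal and the consistency ratio is exactly $1$. No choice of weights or scalings on the two axis points changes this. For robustness, splitting $\mu$ evenly between $(s,0)$ and $(0,s)$ gives $\SC_p(\mathbf{x},0)=\mu s$ but $\OPT_p(\mathbf{x})\ge\frac{\mu}{2}\|(s,-s)\|_p$. That caps the ratio at $2^{1/q}<B_p(c)$ for $c>0$.

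In the extremal configuration no agent sits at the output and the optimum lies on an axis, not near the diagonal. Put weight $\frac{1-\mu}{2}$ at each of $(\pm x,0)$ with $x=(\frac{1-\mu}{\mu})^{1/(p-1)}$, exactly where $h$ attains $\kappa$, and weight $\mu$ at $f=(0,1)$. The horizontal symmetry keeps that median at $0$ and makes the first gradient component at $f$ vanish. The supporting-hyperplane inequality for the two horizontal groups, together with $\mu\|y-f\|_p\ge\mu|y_2-1|$, then shows $f$ is optimal. The ratio is $\frac{x+r}{(1+x^p)^{1/p}}=(1+r^q)^{1/q}$ with $r=\frac{\mu}{1-\mu}$. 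Taking $\mu=\frac{1-c}{2}$ with prediction $f$ gives $A_p(c)$, and $\mu=\frac{1+c}{2}$ with prediction $0$ gives $B_p(c)$. For $p=\infty$, use $(\pm1,0)$ instead.
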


We prove the upper bounds by grouping agents according to the signs of their coordinates relative to the mechanism output.
The median condition bounds the weight on each side, and a distance inequality for each of the four groups then bounds the total cost.
We next give matching three-location instances.
Together, these results also establish the three-location reduction used in the Euclidean analysis of Agrawal et al.\ \cite[Section~4.4]{agrawal2022learning}.

For convenience, we allow positive agent weights $w_i$ with $\sum_i w_i=1$ and write social cost as $\SC_p(\mathbf{x},y)=\sum_i w_i\|x_i-y\|_p$.
The prediction has weight $c$.
Ordinary profiles correspond to $w_i=\frac{1}{n}$.
In this weighted notation, $\OPT_p(\mathbf{x})$ is the minimum of the weighted social cost; for an ordinary profile, both costs are divided by $n$, leaving their ratio and the optimal locations unchanged.
We prove the upper bounds for these weighted profiles and explicitly approximate the lower-bound constructions by ordinary finite profiles.

\subsection{Upper Bounds}

The first lemma bounds the contribution of a single agent.
We place the mechanism output at the origin and compare its cost with the cost at a point $f$ in the nonnegative quadrant.

\begin{lemma}\label{lem:planar-quadrants}
Let $1<p<+\infty$, $f=(a,b)$ with $a,b\ge0$, and $\lambda,\kappa>0$ satisfy $\lambda^{q}+\kappa^{q}=1$.
Write $F=\|f\|_p$, $L=a+b$, and $h(z)=\|z-f\|_p-\lambda\|z\|_p$.
Then
\begin{align*}
 h(z)\ge
 \begin{cases}
  \kappa F, & z_1\le0,\ z_2\le0,\\
  \kappa b-\lambda a, & z_1>0,\ z_2\le0,\\
  \kappa a-\lambda b, & z_1\le0,\ z_2>0,\\
  -\lambda F, & z_1>0,\ z_2>0.
 \end{cases}
\end{align*}
If $\kappa\le\lambda$, the first bound can be strengthened to $h(z)\ge\kappa L$.
\end{lemma}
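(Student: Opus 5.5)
All four cases rest on two elementary facts. The first is the dual (H\"older) description of the $\ell_p$ norm: $\|u\|_p\ge\langle w,u\rangle$ for every $w$ with $\|w\|_q\le1$. The second is the superadditivity $(s+t)^p\ge s^p+t^p$ for $s,t\ge0$. Combining them gives the basic inequality
\begin{align*}
 \|u+v\|_p\ \ge\ \bigl\|(\|u\|_p,\|v\|_p)\bigr\|_p\ \ge\ \lambda\|u\|_p+\kappa\|v\|_p
 \qquad\text{for } u,v\in\R^2_{\ge0}.
\end{align*}
The first step holds because $\sum_j(u_j+v_j)^p\ge\sum_j u_j^p+\sum_j v_j^p$. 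The second step is H\"older applied to the two-vector $(\lambda,\kappa)$, which has $\ell_q$ norm $1$. I will establish this inequality first and then treat the quadrants one at a time.

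For $z_1,z_2\le0$, the vector $z-f$ has absolute coordinates $(|z_1|+a,\,|z_2|+b)$, that is, $u+v$ with $u=(|z_1|,|z_2|)$ and $v=f$. The basic inequality immediately gives $h(z)\ge\kappa F$. For $z_1,z_2>0$, I use $\lambda\le1$ together with the triangle inequality: $\|z-f\|_p\ge\lambda\|z-f\|_p\ge\lambda(\|z\|_p-F)$, hence $h(z)\ge-\lambda F$.

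For the mixed case $z_1>0,\ z_2\le0$, set $s=|z_1-a|$, so that $\|z-f\|_p=\|(s,|z_2|+b)\|_p$. Apply the basic inequality with $u=(s,|z_2|)$ and $v=(0,b)$ to get $\|z-f\|_p\ge\lambda\|(s,|z_2|)\|_p+\kappa b$. Since $z_1\le s+a$, the vectors $(s,|z_2|)$ and $(z_1,|z_2|)$ differ by at most $a$ in the first coordinate and the first dominates up to that shift. The triangle inequality then yields $\|(s,|z_2|)\|_p\ge\|z\|_p-a$, and so $h(z)\ge\kappa b-\lambda a$. The case $z_1\le0,\ z_2>0$ follows by swapping the coordinates.

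The step I expect to be the main obstacle is the strengthening $h(z)\ge\kappa L$ when $\kappa\le\lambda$, because there the target has an $\ell_1$ term, $\kappa(a+b)$, rather than $\kappa F$. I plan to use a single dual vector. With $u=(|z_1|,|z_2|)$, let $w_u\ge0$ be the H\"older dual of $u$, so that $\|w_u\|_q=1$ and $\langle w_u,u\rangle=\|u\|_p$. Define $w_j=\max(\lambda(w_u)_j,\kappa)$ for $j=1,2$. Then $\langle w,u+f\rangle\ge\lambda\|u\|_p+\kappa(a+b)$, and it remains to check that $\|w\|_q\le1$, which I do case by case on which terms attain the maxima. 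If both maxima are $\lambda(w_u)_j$, then $\|w\|_q^q\le\lambda^q\le1$. If exactly one is $\kappa$, then $\|w\|_q^q\le\lambda^q+\kappa^q=1$. If both are $\kappa$, then $\|w\|_q^q=2\kappa^q\le1$, since $\kappa\le\lambda$ forces $\kappa^q\le\frac12$. This last case is exactly where the hypothesis $\kappa\le\lambda$ enters. The dual characterization then gives $\|z-f\|_p\ge\langle w,u+f\rangle$, which completes the strengthened bound.
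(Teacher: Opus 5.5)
Your proof is correct. For the four sign cases you follow the paper essentially step for step. Your ``basic inequality'' packages the paper's two moves in the first two cases: superadditivity of $t\mapsto t^p$, then H\"older against $(\lambda,\kappa)$. In the mixed case you use the same decomposition $(a,0)+(z_1-a,z_2)$ for the cost at the origin, and in the last case the reverse triangle inequality with $\lambda\le1$.

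You genuinely diverge only on the strengthened bound $h(z)\ge\kappa L$. The paper proves $\lambda\|u\|_p+\kappa\|y-u\|_1\le\|y\|_p$ on the box $0\le u\le y$. It observes that the left side is convex in $u$, reduces to the four corners, and checks each corner by a trivial bound or a single H\"older inequality against $(\kappa,\kappa)$, $(\lambda,\kappa)$, or $(\kappa,\lambda)$. You instead build, for each $z$, one dual certificate $w_j=\max\{\lambda(w_u)_j,\kappa\}$. Because $w$ dominates both $\lambda w_u$ and $(\kappa,\kappa)$, it gives $\langle w,u+f\rangle\ge\lambda\|u\|_p+\kappa\|f\|_1$ immediately, and the only work left is the three-case check $\|w\|_q\le1$.

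Your route skips the convex-combination and vertex reduction, and it keeps the whole lemma within one idea, the dual description of the $\ell_p$ norm. The paper's route is more mechanical and makes explicit that the inequality is extremal at the corners of the box. In both arguments the hypothesis $\kappa\le\lambda$ enters at the same point: your all-$\kappa$ case $2\kappa^q\le1$ is exactly the paper's corner $u=0$.
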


\begin{proof}
For any $z$, the triangle inequality gives $\|z\|_p\le\|z-f\|_p+F$.
Since $\lambda<1$, it follows that
\begin{align*}
 h(z)&=\|z-f\|_p-\lambda\|z\|_p =(1-\lambda)\|z-f\|_p+\lambda(\|z-f\|_p-\|z\|_p) \ge(1-\lambda)\|z-f\|_p-\lambda F\ge-\lambda F,
\end{align*}
proving the last bound.

Suppose $z_1>0$ and $z_2\le0$, and put $T=\bigl(|z_1-a|^p+|z_2|^p\bigr)^{\frac{1}{p}}$.
Writing $z=(a,0)+(z_1-a,z_2)$, the triangle inequality gives $\|z\|_p\le a+T$.
For the distance to $f$, note that $z_2\le0$ and $b\ge0$ imply $|z_2-b|=|z_2|+b$.
We also use $(s+t)^p\ge s^p+t^p$ for $s,t\ge0$; indeed, for fixed $s$, the function $(s+t)^p-t^p$ is nondecreasing in $t$ and has value $s^p$ at zero.
Consequently,
\begin{align*}
 \|z-f\|_p^p
 &=|z_1-a|^p+|z_2-b|^p =|z_1-a|^p+(|z_2|+b)^p \ge|z_1-a|^p+|z_2|^p+b^p=T^p+b^p.
\end{align*}
Taking the $p$-th root gives $\|z-f\|_p\ge(T^p+b^p)^{\frac{1}{p}}$.
To bound this expression, recall that H\"older's inequality for nonnegative vectors $r,s\in\R^2$ states that
\begin{align*}
 r_1s_1+r_2s_2
 \le \bigl(r_1^p+r_2^p\bigr)^{\frac{1}{p}}
 \bigl(s_1^{q}+s_2^{q}\bigr)^{\frac{1}{q}},
 \qquad \frac{1}{p}+\frac{1}{q}=1.
\end{align*}
Taking $r=(T,b)$ and $s=(\lambda,\kappa)$, the second factor on the right is one by our assumption on $\lambda$ and $\kappa$.
More explicitly,
\begin{align*}
 \lambda T+\kappa b
 &\le(T^p+b^p)^{\frac{1}{p}}(\lambda^q+\kappa^q)^{\frac{1}{q}}
 =(T^p+b^p)^{\frac{1}{p}}.
\end{align*}
Combining these inequalities gives
\begin{align*}
 h(z)&=\|z-f\|_p-\lambda\|z\|_p \ge(T^p+b^p)^{\frac{1}{p}}-\lambda(T+a) \ge\lambda T+\kappa b-\lambda(T+a)
 =\kappa b-\lambda a.
\end{align*}
For the other mixed-sign case, $z_1\le0$ and $z_2>0$, exchange the roles of the coordinates and put $S=\bigl(|z_1|^p+|z_2-b|^p\bigr)^{\frac{1}{p}}$.
The decomposition $z=(0,b)+(z_1,z_2-b)$ gives $\|z\|_p\le b+S$, while $|z_1-a|=|z_1|+a$ gives $\|z-f\|_p^p\ge S^p+a^p$.
Applying H\"older's inequality to $(S,a)$ and $(\lambda,\kappa)$ therefore yields
\begin{align*}
 h(z)&\ge(S^p+a^p)^{\frac{1}{p}}-\lambda(S+b) \ge\lambda S+\kappa a-\lambda(S+b)
 =\kappa a-\lambda b.
\end{align*}

If both coordinates of $z$ are nonpositive, write $z=-v$ with $v\ge0$ coordinate-wise.
Then $\|z\|_p=\|v\|_p$ and $\|z-f\|_p=\|f+v\|_p$.
Using the same inequality for nonnegative $p$-th powers in both coordinates, we obtain
\begin{align*}
 \|f+v\|_p^p
 &=(a+v_1)^p+(b+v_2)^p \ge a^p+v_1^p+b^p+v_2^p
 =F^p+\|v\|_p^p.
\end{align*}
This time, H\"older's inequality with $r=(\|v\|_p,F)$ and $s=(\lambda,\kappa)$ gives
\begin{align*}
 \lambda\|v\|_p+\kappa F
 &\le\bigl(\|v\|_p^p+F^p\bigr)^{\frac{1}{p}}
 (\lambda^q+\kappa^q)^{\frac{1}{q}} =\bigl(\|v\|_p^p+F^p\bigr)^{\frac{1}{p}}.
\end{align*}
Thus
\begin{align*}
 h(z)&=\|f+v\|_p-\lambda\|v\|_p \ge\bigl(F^p+\|v\|_p^p\bigr)^{\frac{1}{p}}-\lambda\|v\|_p \ge\lambda\|v\|_p+\kappa F-\lambda\|v\|_p=\kappa F.
\end{align*}

It remains to prove the stronger bound when $\kappa\le\lambda$.
We continue to assume $z=-v$ with $v\ge0$.
Since $h(z)=\|f+v\|_p-\lambda\|v\|_p$, the desired bound is equivalent to $\|f+v\|_p\ge\lambda\|v\|_p+\kappa(a+b)$.
To prove it, we first establish an inequality for every nonnegative vector $y$ and every vector $u$ between $0$ and $y$ coordinate-wise, and then substitute $y=f+v$ and $u=v$.
For any nonnegative vector $y$, consider $\phi(u)=\lambda\|u\|_p+\kappa\|y-u\|_1$ on the rectangle $0\le u_j\le y_j$.
Both norm terms are convex functions of $u$, so $\phi$ is convex.
The vertices are $v^{(1)}=(0,0)$, $v^{(2)}=(y_1,0)$, $v^{(3)}=(0,y_2)$, and $v^{(4)}=y$.
Every point $u$ of the rectangle can be written as $u=\sum_{j=1}^4\theta_jv^{(j)}$ with $\theta_j\ge0$ and $\sum_{j=1}^4\theta_j=1$.
Convexity then gives
\begin{align*}
 \phi(u)
 &=\phi\left(\sum_{j=1}^4\theta_jv^{(j)}\right)
 \le\sum_{j=1}^4\theta_j\phi(v^{(j)})
 \le\max_{1\le j\le4}\phi(v^{(j)}).
\end{align*}
This argument also applies when a coordinate of $y$ is zero, in which case some vertices coincide.
It therefore suffices to show that each of the four vertex values is at most $\|y\|_p$.
At $u=0$, the value is $\kappa(y_1+y_2)$.
Since $2\kappa^{q}\le\lambda^{q}+\kappa^{q}=1$, we have $\kappa\le2^{-\frac{1}{q}}$.
Applying H\"older's inequality to $r=y$ and $s=(1,1)$ therefore gives
\begin{align*}
 \phi(0)&=\kappa(y_1+y_2) \le\kappa(y_1^p+y_2^p)^{\frac{1}{p}}(1^q+1^q)^{\frac{1}{q}} =\kappa 2^{\frac{1}{q}}\|y\|_p\le\|y\|_p.
\end{align*}
At $u=y$, the second norm term vanishes, so $\phi(y)=\lambda\|y\|_p\le\|y\|_p$.
At $u=(y_1,0)$, the two norm terms are $y_1$ and $y_2$, respectively.
H\"older's inequality with $r=y$ and $s=(\lambda,\kappa)$ gives
\begin{align*}
 \phi((y_1,0))&=\lambda y_1+\kappa y_2 \le(y_1^p+y_2^p)^{\frac{1}{p}}(\lambda^q+\kappa^q)^{\frac{1}{q}}
 =\|y\|_p.
\end{align*}
At $u=(0,y_2)$, use $s=(\kappa,\lambda)$ instead to obtain
\begin{align*}
 \phi((0,y_2))&=\kappa y_1+\lambda y_2 \le(y_1^p+y_2^p)^{\frac{1}{p}}(\kappa^q+\lambda^q)^{\frac{1}{q}}
 =\|y\|_p.
\end{align*}
All four vertex bounds hold, and hence $\lambda\|u\|_p+\kappa\|y-u\|_1\le\|y\|_p$ throughout the rectangle.
Finally, take $y=f+v$ and $u=v$; this point belongs to the rectangle because $f,v\ge0$ coordinate-wise.
Since $y-u=f$ and $\|f\|_1=a+b=L$, we obtain
\begin{align*}
 h(z)&=\|f+v\|_p-\lambda\|v\|_p \ge\bigl(\lambda\|v\|_p+\kappa\|f\|_1\bigr)-\lambda\|v\|_p
 =\kappa L,
\end{align*}
which proves the stronger bound.
\end{proof}

The next lemma combines these four bounds.
Its hypothesis concerns only the mass on each side of the origin; the comparison point need not be optimal.

\begin{lemma}\label{lem:planar-cost}
Fix $1<p<+\infty$ and $\mu\in(0,1)$.
Let $\mathbf{x}$ be a weighted profile of total weight one and let $f\in\R^2$.
Suppose that, for every coordinate $j$ with $f_j\ne0$, the total weight of agents satisfying $x_{ij}f_j>0$ is at most $\mu$.
Then
\begin{align*}
 \sum_i w_i\|x_i\|_p
 \le
 \left(1+\left(\frac{\mu}{1-\mu}\right)^{q}\right)^{\frac{1}{q}}
 \sum_i w_i\|x_i-f\|_p.
\end{align*}
\end{lemma}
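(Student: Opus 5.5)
The plan is to apply Lemma~\ref{lem:planar-quadrants} to every agent with a single choice of $\lambda,\kappa$ and sum the resulting bounds. By flipping the sign of any coordinate with $f_j<0$, which preserves all $\ell_p$ distances and turns the hypothesis into ``the weight with $x_{ij}>0$ is at most $\mu$'', we may assume $f=(a,b)$ with $a,b\ge0$. Take $\lambda,\kappa>0$ with $\lambda^q+\kappa^q=1$ and $\kappa/\lambda=\mu/(1-\mu)$. Then $1/\lambda=(1+(\mu/(1-\mu))^q)^{1/q}$ is exactly the claimed constant. It therefore suffices to show $\sum_i w_i h(x_i)\ge0$, where $h(z)=\|z-f\|_p-\lambda\|z\|_p$.

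Let $w_0,w_1,w_2,w_3$ be the total weights of the four groups in Lemma~\ref{lem:planar-quadrants}, in the order listed there: both coordinates $\le0$, only the first positive, only the second positive, both positive. The hypothesis gives $w_1+w_3\le\mu$ when $a>0$ and $w_2+w_3\le\mu$ when $b>0$. If $a=0$ or $b=0$, the missing constraint is harmless, because every term it would control carries a factor $a$ or $b$ in the bounds below. The relation $\kappa(1-\mu)=\lambda\mu$ will be the only identity needed. I split into two cases according to whether $\kappa\le\lambda$, that is, whether $\mu\le\frac12$.

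\emph{Case $\mu\le\frac12$.} Use the strengthened bound $\kappa L$ on the first group and $-\lambda F\ge-\lambda L$ on the last. Collecting coefficients of $a$ and $b$ gives
\begin{align*}
\sum_i w_ih(x_i)\ge a\bigl(\kappa(w_0+w_2)-\lambda(w_1+w_3)\bigr)+b\bigl(\kappa(w_0+w_1)-\lambda(w_2+w_3)\bigr).
\end{align*}
Since $w_1+w_3\le\mu$ and $w_0+w_2\ge1-\mu$, each bracket is at least $\kappa(1-\mu)-\lambda\mu=0$.

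\emph{Case $\mu>\frac12$.} Now $\kappa>\lambda$, so the strengthening is unavailable, and the plain bound $\kappa F$ must be used on the first group. This is the main obstacle: in this case the mixed groups must be paired to produce a nonnegative contribution. Let $m=\min(w_1,w_2)$ and $e=|w_1-w_2|$. Pairing $m$ units from each mixed group gives $m\bigl((\kappa b-\lambda a)+(\kappa a-\lambda b)\bigr)=m(\kappa-\lambda)L\ge m(\kappa-\lambda)F$. The unpaired excess $e$ is bounded below crudely by $-\lambda F$, using $\kappa b-\lambda a\ge-\lambda a\ge-\lambda F$ and the symmetric estimate. Altogether,
\begin{align*}
\sum_i w_ih(x_i)\ge F\bigl(\kappa w_0+m(\kappa-\lambda)-\lambda(e+w_3)\bigr).
\end{align*}
Say $w_1\ge w_2$, so $e+m=w_1$ and $m=w_2$. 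The constraint $w_1+w_3\le\mu$ gives $e+w_3\le\mu-m$, and $w_0=1-w_1-w_2-w_3\ge1-\mu-m$. Substituting, the bracket is at least $\kappa(1-\mu-m)+m(\kappa-\lambda)-\lambda(\mu-m)=\kappa(1-\mu)-\lambda\mu=0$. I would double-check this bookkeeping, namely that the constraint on the heavier mixed group is the one that binds and that each lower bound is monotone in the right direction, since that is where an error could hide. The last step, dividing $\sum_i w_i h(x_i)\ge0$ by $\lambda$, yields the lemma.
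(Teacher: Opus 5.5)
Your argument is correct in substance, but it is organized differently from the paper's. The paper first disposes of $f=0$ and of $a=0$ or $b=0$ separately. For $a,b>0$ it treats the four group weights as variables of a small minimization. Since switching a sign from nonpositive to positive never increases the per-group bound, mass can be transferred until both marginal constraints are tight. The weighted sum is then an affine function $T(t)$ of the weight $t$ of the group with both coordinates positive. The paper checks it at its two endpoints, with $G=\kappa L$ or $G=\kappa F$ according to whether $\mu\le\frac12$.

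You bypass the transfer step. For $\mu\le\frac12$, weakening $-\lambda F$ to $-\lambda L$ makes every group bound linear in $(a,b)$, so the sum splits into two one-coordinate inequalities. For $\mu>\frac12$, your pairing reduces the bracket to $\kappa-(\kappa+\lambda)\bigl(\max\{w_1,w_2\}+w_3\bigr)$, so only the constraint for the heavier mixed group is needed. Your route is shorter and shows how little of the joint sign structure is used. The paper's route computes the exact minimum of the relaxed bound (for instance $T(0)=0$ and $T(\mu)=\mu\lambda(L-F)$ when $\mu\le\frac12$), which shows exactly where slack remains.

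There is one slip to repair. Your claim that the missing constraint is harmless when $a=0$ or $b=0$ is justified only in the case $\mu\le\frac12$. For $\mu>\frac12$ everything is expressed through $F$. If $a=0<b$ and $w_1>w_2$, you invoke $w_1+w_3\le\mu$, which the hypothesis does not supply. With $w_1=1$, your displayed bound is $-\lambda F<0$. The symmetry behind ``say $w_1\ge w_2$'' does not rescue this, because it also swaps $a$ and $b$.

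The fix is one line. When $ab=0$ we have $L=F$, so the bound $\kappa L$ on the first group is just the plain bound $\kappa F$ and needs no assumption $\kappa\le\lambda$. Your $\mu\le\frac12$ computation then applies for every $\mu$, with the uncontrolled bracket multiplied by zero.
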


\begin{proof}
Set $r=\frac{\mu}{1-\mu}$, $\rho=\bigl(1+r^{q}\bigr)^{\frac{1}{q}}$, $\lambda=\frac{1}{\rho}$, and $\kappa=r\lambda$.
These choices give the two identities
\begin{align*}
 \lambda^q+\kappa^q=\frac{1+r^q}{\rho^q}=1,
 \qquad (1-\mu)\kappa=(1-\mu)\frac{\mu}{1-\mu}\lambda=\mu\lambda.
\end{align*}
The first allows us to apply Lemma~\ref{lem:planar-quadrants}, while the second will cancel the terms involving the two side weights.
With $h$ as in that lemma, we have 
\begin{align*}
 \sum_iw_i h(x_i)=\sum_iw_i\|x_i-f\|_p-\lambda\sum_iw_i\|x_i\|_p.
\end{align*}
Thus proving that this sum is nonnegative and dividing by $\lambda>0$ gives the claimed factor $\rho$.

Reflect coordinates of both the reports and $f$ as needed so that $f=(a,b)$ is nonnegative.
This preserves the distances and the hypothesis about weights on the same side as $f$.
If $f=0$, the two costs coincide and the conclusion follows from $\rho\ge1$.
If $a=0$ and $b>0$, Lemma~\ref{lem:planar-quadrants} gives $h(x_i)\ge\kappa b$ for $x_{i2}\le0$ and $h(x_i)\ge-\lambda b$ otherwise.
Let $\eta$ be the weight with $x_{i2}>0$, so that $\eta\le\mu$.
Then
\begin{align*}
 \sum_iw_i h(x_i)\ge(1-\eta)\kappa b-\eta\lambda b
 \ge\bigl((1-\mu)\kappa-\mu\lambda\bigr)b=0.
\end{align*}
Here the second inequality holds because the expression is decreasing in $\eta$.
If $b=0$ and $a>0$, the same argument uses the weight with $x_{i1}>0$ and replaces $b$ by $a$.

Now suppose $a,b>0$, and put $F=\|f\|_p$ and $L=a+b$.
Separate the agents into four groups according to whether each coordinate is strictly positive or nonpositive.
Let $G=\kappa L$ when $\mu\le\frac12$, and $G=\kappa F$ when $\mu>\frac12$.
When $\mu\le\frac12$, we have $\frac{\kappa}{\lambda}=\frac{\mu}{1-\mu}\le1$, so the stronger bound of Lemma~\ref{lem:planar-quadrants} is available.
That lemma bounds the contributions of the four groups by $G$, $\kappa b-\lambda a$, $\kappa a-\lambda b$, and $-\lambda F$, respectively.
Write their weights as $m_{--},m_{+-},m_{-+},m_{++}$, where a minus sign includes zero.
They sum to one and satisfy $m_{+-}+m_{++}\le\mu$ and $m_{-+}+m_{++}\le\mu$.

Consider the minimum weighted sum of these four numbers subject to the two positive-coordinate weights being at most $\mu$.
Changing either sign from nonpositive to positive never increases the corresponding number.
Indeed, $G\ge\kappa a,\kappa b$ because both $L$ and $F$ are at least $a,b$, and
\begin{align*}
 (\kappa b-\lambda a)-(-\lambda F)=\kappa b+\lambda(F-a)\ge0,
 \qquad
 (\kappa a-\lambda b)-(-\lambda F)=\kappa a+\lambda(F-b)\ge0.
\end{align*}
If the first positive-coordinate weight is $\eta<\mu$, transfer mass $\mu-\eta$ from the two groups with a nonpositive first coordinate to the corresponding groups with a positive first coordinate, leaving the second sign unchanged.
There is enough mass to do this because these source groups have total weight $1-\eta\ge\mu-\eta$.
The transfer does not increase the weighted sum and leaves the second positive-coordinate weight unchanged.
Repeat for the second coordinate, now leaving the first sign unchanged.
Both positive-coordinate weights then equal $\mu$.
These are transfers among four abstract weights to minimize a lower bound; no agent locations or mechanism outputs are changed.

Let $t$ be the weight in the group with two positive coordinates.
The two marginal equalities give $m_{+-}=m_{-+}=\mu-t$, and the total-weight equality gives $m_{--}=1-2\mu+t$.
Nonnegativity of these four weights is equivalent to $\max\{0,2\mu-1\}\le t\le\mu$.
The weighted sum of the bounds becomes
\begin{align}
 T(t)=(1-2\mu+t)G+(\mu-t)(\kappa-\lambda)L-t\lambda F.
 \label{eq:planar-sign-sum}
\end{align}
This expression is affine in $t$, so it is enough to check the two endpoints.

If $\mu\le\frac12$, substitute $G=\kappa L$ into \eqref{eq:planar-sign-sum}.
The endpoints are $0$ and $\mu$, and direct substitution gives
\begin{align*}
 T(0)&=\bigl((1-2\mu)\kappa+\mu(\kappa-\lambda)\bigr)L
       =\bigl((1-\mu)\kappa-\mu\lambda\bigr)L=0,\\
 T(\mu)&=(1-\mu)\kappa L-\mu\lambda F=\mu\lambda(L-F)\ge0.
\end{align*}
If $\mu>\frac12$, use $G=\kappa F$ instead.
The endpoints are now $2\mu-1$ and $\mu$.
Since $(1-\mu)(\kappa-\lambda)=(2\mu-1)\lambda$, we obtain
\begin{align*}
 T(2\mu-1)&=(1-\mu)(\kappa-\lambda)L-(2\mu-1)\lambda F
            =(2\mu-1)\lambda(L-F)\ge0,\\
 T(\mu)&=(1-\mu)\kappa F-\mu\lambda F=0.
\end{align*}
In both cases $L=a+b\ge(a^p+b^p)^{\frac{1}{p}}=F$, which explains the nonnegative signs.
Hence $T(t)$ is nonnegative throughout the feasible interval.
The transfers could only decrease the sum of the four group bounds, so that sum is also nonnegative for the original group weights.
Since $\sum_i w_i h(x_i)$ is at least this sum, the desired cost inequality follows.
\end{proof}

\begin{proof}[Proof of the upper bounds in Theorem~\ref{thm:planar}]
First take $1<p<+\infty$, and translate the mechanism output to the origin.
For consistency, use the prediction itself as the comparison point $f=\widehat f$.
In every coordinate with $f_j\ne0$, all prediction weight lies strictly on the side of the origin containing $f_j$.
The median condition allows augmented weight at most $\frac{1+c}{2}$ on that side.
After subtracting the prediction weight $c$, the real weight on that side is at most $\frac{1-c}{2}$.
Lemma~\ref{lem:planar-cost}, with $\mu=\frac{1-c}{2}$, therefore gives 
\begin{align*}
 \SC_p(\mathbf{x},\CMP_c(\mathbf{x},f))\le A_p(c)\SC_p(\mathbf{x},f).
\end{align*}
Indeed, the parameter in that lemma becomes $\frac{\mu}{1-\mu}=\frac{1-c}{1+c}$, so its factor is exactly $A_p(c)$.
When $f$ is optimal, $\SC_p(\mathbf{x},f)=\OPT_p(\mathbf{x})$, proving consistency.

For robustness, take any comparison point $f$ and any prediction.
In each coordinate, the real weight strictly on either side of the output is at most the augmented weight on that side, and hence at most $\frac{1+c}{2}$.
Apply Lemma~\ref{lem:planar-cost} with $\mu=\frac{1+c}{2}$ and then choose $f$ to be optimal.
Here $\frac{\mu}{1-\mu}=\frac{1+c}{1-c}$, and hence the lemma gives
\begin{align*}
 \SC_p(\mathbf{x},\CMP_c(\mathbf{x},\widehat f))
 \le\left(1+\left(\frac{1+c}{1-c}\right)^q\right)^{\frac{1}{q}}\SC_p(\mathbf{x},f)
 =B_p(c)\OPT_p(\mathbf{x}).
\end{align*}
These arguments use only the median inequalities, which hold for the selected lower median even when half of the augmented weight lies on one side.
The reflections in Lemma~\ref{lem:planar-cost} are used to compare distances and side weights, not to select a new mechanism output.

Finally, the cost comparisons above hold for every comparison point $f$: consistency uses prediction $f$, while robustness permits any prediction.
Fix the profile, prediction, and comparison point, and let $p$ tend to $1$ or to $+\infty$.
The mechanism output is independent of $p$, and every distance converges to the corresponding endpoint norm.
For fixed $r>0$, $\bigl(1+r^{q}\bigr)^{\frac{1}{q}}$ converges to $\max\{1,r\}$ as $p\downarrow1$ and to $1+r$ as $p\to+\infty$.
For the first limit, if $M=\max\{1,r\}$ then $M\le(1+r^q)^{\frac{1}{q}}\le2^{\frac{1}{q}}M$, and $q\to+\infty$ as $p\downarrow1$.
For the second, $q\to1$ as $p\to+\infty$, so ordinary continuity gives $1+r$.
Thus we may pass to the limit in the cost inequalities without assuming that an optimal facility varies continuously with $p$.
This proves the endpoint cost comparisons as well.
Choosing an optimum for the endpoint norm gives the stated values of $A_1$, $B_1$, $A_\infty$, and $B_\infty$ as upper bounds.
\end{proof}

\subsection{Tightness}

We next construct matching instances.
The following calculation also verifies that the prediction used for consistency is optimal.

\begin{lemma}\label{lem:planar-lower}
Fix $1<p<+\infty$ and $\mu\in(0,1)$.
Let $r=\frac{\mu}{1-\mu}$ and $x=r^{-\frac{1}{p-1}}$.
Place mass $\frac{1-\mu}{2}$ at each of $(-x,0)$ and $(x,0)$ and mass $\mu$ at $f=(0,1)$.
Then $f$ is optimal and
\begin{align}
 \frac{\SC_p(\mathbf{x},0)}{\OPT_p(\mathbf{x})}
 =\big(1+r^{q}\big)^{\frac{1}{q}}.
 \label{eq:planar-lower-ratio}
\end{align}
\end{lemma}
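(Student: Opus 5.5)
The plan is to verify the two claims directly: first that $f=(0,1)$ minimizes the weighted social cost, and then that the ratio identity \eqref{eq:planar-lower-ratio} holds. Both reduce to short computations once the relations between $x$ and $r$ are recorded. Since $x=r^{-\frac{1}{p-1}}$ and $\frac{p}{p-1}=q$, $\frac{1}{p-1}=q-1$, we have
\begin{align*}
 x^p=r^{-q},\qquad x=r^{1-q},\qquad (x^p+1)^{\frac{1}{p}}=(1+r^q)^{\frac{1}{p}}\,r^{-\frac{q}{p}}=(1+r^q)^{\frac{1}{p}}\,r^{1-q},
\end{align*}
where the last step uses $\frac{q}{p}=q-1$.

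For optimality, I will use convexity of $y\mapsto\SC_p(\mathbf{x},y)$ and show that $0$ lies in its subdifferential at $f$. For $1<p<+\infty$ the $\ell_p$ norm is differentiable away from the origin, with gradient $g(v)_j=\operatorname{sign}(v_j)|v_j|^{p-1}/\|v\|_p^{p-1}$. For the two side masses the vectors are $f-(\pm x,0)=(\mp x,1)$. By symmetry their first-coordinate gradient components cancel, and the combined gradient is
\begin{align*}
 \Bigl(0,\ (1-\mu)(x^p+1)^{-\frac{1}{q}}\Bigr).
\end{align*}
The mass $\mu$ located at $f$ contributes $\mu$ times the subdifferential of the norm at $0$, which is the dual unit ball $\{g:\|g\|_q\le1\}$. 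It can therefore cancel any vector of $\ell_q$ norm at most $\mu$. So it suffices to check
\begin{align*}
 (1-\mu)(x^p+1)^{-\frac{1}{q}}\le\mu.
\end{align*}
Substituting $(x^p+1)^{\frac{1}{q}}=(1+r^q)^{\frac{1}{q}}/r$ and $\frac{\mu}{1-\mu}=r$, this becomes $(1+r^q)^{\frac{1}{q}}\ge1$, which is trivially true. Hence $f$ is optimal.

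For the ratio, the mechanism-side cost at the origin is
\begin{align*}
 \SC_p(\mathbf{x},0)=(1-\mu)x+\mu=(1-\mu)(x+r)=(1-\mu)\,r^{1-q}(1+r^q).
\end{align*}
The optimal cost is
\begin{align*}
 \OPT_p(\mathbf{x})=\SC_p(\mathbf{x},f)=(1-\mu)(x^p+1)^{\frac{1}{p}}=(1-\mu)(1+r^q)^{\frac{1}{p}}r^{1-q}.
\end{align*}
Dividing gives $(1+r^q)^{1-\frac{1}{p}}=(1+r^q)^{\frac{1}{q}}$, which is \eqref{eq:planar-lower-ratio}.

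The only delicate point is the subgradient step, because $f$ itself carries mass and the cost is not differentiable there. I will handle it explicitly with the dual-ball description of $\partial\|\cdot\|_p(0)$, which is valid since $\|\cdot\|_q$ is the dual norm. The rest is algebra in the exponent identities $\frac{q}{p}=q-1$ and $\frac{1}{p-1}=q-1$.
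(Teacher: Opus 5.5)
Your proof is correct and follows essentially the same route as the paper: the same gradient computation $\nabla C_H(f)=\bigl(0,\mu(1+r^q)^{-1/q}\bigr)$, the same bound by $\mu$, and the same cost ratio algebra. The only difference is phrasing: the paper avoids subdifferential language by adding the supporting-hyperplane inequality for $C_H$ to the elementary bound $\mu\|y-f\|_p\ge\mu|y_2-1|$, which is exactly the easy inclusion $\nabla C_H(f)+\mu u\in\partial\,\SC_p(\mathbf{x},\cdot)(f)$ (with $\|u\|_q\le1$) that your dual-ball argument uses.
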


\begin{proof}
Write the cost of the two horizontal groups explicitly as
\begin{align*}
 C_H(y)=\frac{1-\mu}{2}\bigl(\|y-(-x,0)\|_p+\|y-(x,0)\|_p\bigr).
\end{align*}
Since $x>0$, both distances are differentiable at $f=(0,1)$.
Their first-coordinate derivatives have opposite signs and equal magnitudes, while both second-coordinate derivatives are $(1+x^p)^{-\frac{1}{q}}$.
Thus $\nabla C_H(f)=(0,g)$, where
\begin{align*}
 g=\frac{1-\mu}{(1+x^p)^{(p-1)/p}}
 =\mu\left(1+r^{q}\right)^{-\frac{1}{q}}\le\mu.
\end{align*}
For the displayed simplification, use $x^p=r^{-q}$ and $(1-\mu)r=\mu$, so that $(1+x^p)^{-\frac{1}{q}}=r(1+r^q)^{-\frac{1}{q}}$.
The supporting-hyperplane inequality for the convex differentiable function $C_H$ gives 
\begin{align*}
 C_H(y)\ge C_H(f)+\langle\nabla C_H(f),y-f\rangle=C_H(f)+g(y_2-1)
\end{align*} 
for every $y$.
The group at $f$ contributes 
\begin{align*}
 \mu\|y-f\|_p\ge\mu|y_2-1|\ge-g(y_2-1).
\end{align*}
Adding the two inequalities proves $\SC_p(\mathbf{x},y)\ge\SC_p(\mathbf{x},f)$, so $f$ is optimal.
Finally, the costs at the origin and at $f$ are $(1-\mu)x+\mu$ and $(1-\mu)\bigl(1+x^p\bigr)^{\frac{1}{p}}$.
Since $x^p=r^{-q}$ and $\frac{r}{x}=r^q$, dividing the numerator and denominator by $x$ gives
\begin{align*}
 \frac{\SC_p(\mathbf{x},0)}{\OPT_p(\mathbf{x})}
 =\frac{x+r}{(1+x^p)^{1/p}}
 =\frac{1+r^q}{(1+r^q)^{1/p}}
 =\big(1+r^q\big)^{\frac{1}{q}},
\end{align*}
which proves \eqref{eq:planar-lower-ratio}.
\end{proof}

\begin{proof}[Proof of tightness in Theorem~\ref{thm:planar}]
For $1<p<+\infty$, set $\mu=\frac{1-c}{2}$ in Lemma~\ref{lem:planar-lower} and use the correct prediction $f$.
In the vertical coordinate, the real mass at zero is $1-\mu=\frac{1+c}{2}$, exactly half of the augmented total weight.
Hence the lower median is zero.
In the horizontal coordinate, each nonzero side has weight $\frac{1-\mu}{2}<\frac{1+c}{2}$, so the median is also zero.
Equation~\eqref{eq:planar-lower-ratio} gives the ratio $A_p(c)$.

For robustness, set $\mu=\frac{1+c}{2}$ and use prediction $(0,0)$.
The augmented vertical mass at zero is $1-\mu+c=\frac{1+c}{2}$, so the lower median is again zero.
The horizontal median is zero for the same reason as above.
Equation~\eqref{eq:planar-lower-ratio} now gives $B_p(c)$.

For $p=\infty$, use the same two choices of $\mu$ and prediction, but put the horizontal groups at $(-1,0)$ and $(1,0)$.
The triangle inequality gives $\|y-(-1,0)\|_\infty+\|y-(1,0)\|_\infty\ge2$ for every $y$.
Thus the horizontal groups alone have cost at least $1-\mu$, and $f=(0,1)$ attains this cost while serving the remaining group at zero cost.
The mechanism still returns the origin, where the cost is one.
The ratio is $\frac{1}{1-\mu}$, yielding $A_\infty(c)$ and $B_\infty(c)$.

For $p=1$, consistency is already optimal.
For robustness, put mass $1-\mu$ at the origin and mass $\mu=\frac{1+c}{2}$ at $f=(0,1)$, with prediction at the origin.
Since $\mu\ge\frac12$, the triangle inequality gives, for every $y$,
\begin{align*}
 (1-\mu)\|y\|_1+\mu\|y-f\|_1
 \ge(1-\mu)\bigl(\|f\|_1-\|y-f\|_1\bigr)+\mu\|y-f\|_1
 \ge1-\mu.
\end{align*}
Equality holds at $y=f$, so $f$ is optimal.
The augmented mass at zero is exactly half of the total, so the lower median is zero.
The ratio is $\frac{\mu}{1-\mu}=\frac{1+c}{1-c}$.

To obtain ordinary finite profiles for any real $c$, choose $\mu_k=\frac{\lfloor2k\mu\rfloor}{2k}$ and take $k$ large enough that $\mu_k>0$.
Then $\mu_k\le\mu$ and $\mu_k\to\mu$.
In the three-location constructions, replace $\mu$ by $\mu_k$ and, for finite $p>1$, choose the corresponding $x_k$ from Lemma~\ref{lem:planar-lower}.
All real weights are rational: a population of $4k$ agents realizes mass $\mu_k$ at $f$ and mass $\frac{1-\mu_k}{2}$ at each horizontal point.
In the consistency construction the vertical mass at zero is $1-\mu_k\ge1-\mu=\frac{1+c}{2}$; in the robustness construction it is $1-\mu_k+c\ge1-\mu+c=\frac{1+c}{2}$.
There is no negative vertical mass, so the selected lower median is zero.
In the horizontal coordinate each nonzero side has mass $\frac{1-\mu_k}{2}<\frac{1+c}{2}$, and the positive mass at zero makes the cumulative weight reach half there.
Thus the horizontal median is also zero.
The point $f$ remains optimal by the same construction, and the ratios converge to the claimed bounds.
For the two-location $\ell_1$ construction, $\mu_k\ge\frac12$ as well, so the same argument applies.
Thus tightness holds for the suprema over ordinary finite profiles, without an integrality assumption on $cn$.
\end{proof}

We can now state the three-location reduction as a consequence of the upper bounds and matching instances.
This also avoids any assumption that a sequence of instance transformations terminates.

\begin{corollary}[Planar reduction]\label{cor:planar-reduction}
Fix $1<p<+\infty$ and $c\in[0,1)$.
For consistency, consider any instance with a correct prediction and positive optimal cost.
There is a weighted instance with an approximation ratio at least as large in which the real agents occupy $(-x,0)$, $(x,0)$, and $(0,1)$ for some $x>0$.
The mechanism returns $(0,0)$, and the optimal facility and prediction are $(0,1)$.
For robustness, the same statement holds for arbitrary initial predictions, with the new prediction at $(0,0)$.
If ordinary finite profiles are required, the new ratios can approach the respective worst-case bounds arbitrarily closely.
\end{corollary}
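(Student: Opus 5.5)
The corollary follows by combining the two halves of Theorem~\ref{thm:planar}; I would not transform the given instance at all. Fix $1<p<+\infty$ and $c\in[0,1)$, and take a consistency instance with a correct prediction and positive optimal cost. Translate so that the mechanism output is at the origin. The upper-bound part of Theorem~\ref{thm:planar}, via Lemma~\ref{lem:planar-cost} with $\mu=\frac{1-c}{2}$, bounds its ratio by $A_p(c)=(1+r^q)^{1/q}$ with $r=\frac{1-c}{1+c}$.

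Next, invoke Lemma~\ref{lem:planar-lower} with the same $\mu$ and $x=r^{-1/(p-1)}>0$. This gives a weighted instance with real agents at $(-x,0)$, $(x,0)$, and $(0,1)$ in which $(0,1)$ is optimal. Its ratio at the origin is exactly $(1+r^q)^{1/q}=A_p(c)$, which is at least the original ratio. The tightness proof of Theorem~\ref{thm:planar} already checks that, with prediction $(0,1)$ and weight $c$, the lower median is $(0,0)$ in both coordinates. So the mechanism returns $(0,0)$, and the prediction equals the optimal facility $(0,1)$.

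For robustness, the argument is the same with $\mu=\frac{1+c}{2}$ and prediction $(0,0)$. The original ratio is at most $B_p(c)$, and the three-location instance attains $B_p(c)$. The tightness proof shows the augmented vertical mass at zero is exactly $\frac{1+c}{2}$, so the lower median is again the origin. The original prediction is arbitrary here, so the new prediction is simply chosen to be $(0,0)$.

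For ordinary finite profiles, the last paragraph of the tightness proof applies. Replacing $\mu$ by $\mu_k=\lfloor 2k\mu\rfloor/(2k)$ and $x$ by $x_k$, with $4k$ agents, gives integral profiles on the same three-location pattern. The output stays at the origin and $(0,1)$ stays optimal, and the ratios $(1+r_k^q)^{1/q}$ converge to $A_p(c)$ or $B_p(c)$.

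The only delicate point is one of interpretation. The instance produced need not have ratio at least as large as the original one for finite profiles: $\mu_k\le\mu$, so $r_k\le r$, and the approximating ratios lie slightly below the worst-case bound. That is why the statement asks only that they approach the bound arbitrarily closely. For weighted instances, equality with the supremum gives ``at least as large'' directly.
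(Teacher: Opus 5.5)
Your proposal is correct and follows the paper's own proof: you bound the starting ratio by $A_p(c)$ or $B_p(c)$ using the upper-bound half of Theorem~\ref{thm:planar}, and you note that the three-location instances of Lemma~\ref{lem:planar-lower}, with predictions $(0,1)$ and $(0,0)$ respectively, attain these bounds, so the starting instance is never transformed. You then invoke the rational approximations from the tightness proof for ordinary finite profiles; your remark that those approximants only approach the bound, and need not exceed the original ratio, matches the paper's reading of the last sentence of the statement.
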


\begin{proof}
Theorem~\ref{thm:planar} bounds every consistency ratio by $A_p(c)$ and every robustness ratio by $B_p(c)$.
The corresponding weighted instances in Lemma~\ref{lem:planar-lower}, with the predictions specified in the tightness proof, attain these bounds and have the stated form.
In particular, if the starting ratio is $R$, the appropriate construction has ratio $A_p(c)\ge R$ for consistency or $B_p(c)\ge R$ for robustness.
This comparison uses the already proved bounds, not a sequence of transformations of the starting profile.
The rational approximations in that proof establish the assertion for ordinary finite profiles; they approach the weighted ratio, without requiring exact attainment by a finite profile.
\end{proof}

For $p=2$, the two formulas become $A_2(c)=\frac{\sqrt{2c^2+2}}{1+c}$ and $B_2(c)=\frac{\sqrt{2c^2+2}}{1-c}$, recovering the Euclidean guarantees of Agrawal et al.\ \cite{agrawal2022learning}.
The planar upper bound uses both coordinate mass constraints through the four sign groups in Lemma~\ref{lem:planar-cost}.
The next section treats arbitrary dimensions by a different aggregation of the coordinate constraints.

\section{Asymptotically Tight Bounds in High Dimensions}\label{sec:high-dimensional}

We now consider $\ell_p(\R^d)$ for arbitrary $d$.
We first prove dimension-independent upper bounds and then construct instances that approach these bounds as the dimension grows.

\subsection{Upper Bounds}

\begin{theorem}\label{thm:high-upper}
For every $1<p<+\infty$, $c\in[0,1)$, and $d\ge1$, the coordinate-wise median with predictions satisfies
\begin{align}
 \alpha_{p,d}(c)\le\Gamma_p\left(\frac{1-c}{2}\right),
 \qquad
 \beta_{p,d}(c)\le\Gamma_p\left(\frac{1+c}{2}\right).
 \label{eq:high-upper}
\end{align}
Here, for $\mu\in(0,1)$ and $q=\frac{p}{p-1}$, define
\begin{align}
 R_{p,\mu}(a)&=\frac{a^{1/p}+(\mu-a)/(1-\mu)}{(1-a)^{1/p}},
 &\Delta_p(\mu)&=\max_{0\le a\le\mu}R_{p,\mu}(a),
 &\Gamma_p(\mu)&=\left(1+\Delta_p(\mu)^q\right)^{\frac{1}{q}}.
 \label{eq:high-gamma}
\end{align}
\end{theorem}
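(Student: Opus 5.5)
Throughout, $\mu=\frac{1-c}{2}$ for consistency and $\mu=\frac{1+c}{2}$ for robustness. The plan is to reuse the reduction in the proof of the planar upper bounds. Translate the mechanism output to the origin and pick a comparison point $f$: the prediction for consistency, an optimum for robustness. The median condition then says that, in every coordinate $j$ with $f_j\ne0$, the real weight with $x_{ij}f_j>0$ is at most $\mu$. So it suffices to prove a $d$-dimensional analogue of Lemma~\ref{lem:planar-cost}: under this hypothesis, $\sum_iw_i\|x_i\|_p\le\Gamma_p(\mu)\sum_iw_i\|x_i-f\|_p$. As in the planar case, reflect coordinates so that $f\ge0$, and normalize $\|f\|_p=1$.

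Next, aggregate the $d$ coordinate constraints into a single scalar per agent. For agent $i$ let $S_i=\{j:x_{ij}>0,\ f_j>0\}$ and $a_i=\sum_{j\in S_i}f_j^p\in[0,1]$. This is the $p$-th power of the part of $f$ on which the agent lies on the same side as $f$. Multiplying the coordinate constraint by $f_j^p$ and summing over $j$ gives $\sum_iw_ia_i\le\mu\sum_jf_j^p=\mu$. All coordinate constraints are thus replaced by one averaged constraint $\mathbb{E}[a]\le\mu$.

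The core step is a per-agent inequality depending only on $a_i$. Write $z=x_i$, $a=a_i$, $S=S_i$, and split $f=f_S+f_{S^c}$. On $S^c$ we have $|z_j-f_j|=|z_j|+f_j$, so superadditivity of $t\mapsto t^p$, as in Lemma~\ref{lem:planar-quadrants}, gives
\begin{align*}
 \|z-f\|_p^p\ge\|z_{S^c}\|_p^p+(1-a)+\|z_S-f_S\|_p^p .
\end{align*}
The triangle inequality gives $\|z\|_p\le\|z-f_S\|_p+a^{1/p}$. Combining the two with H\"older, with weights $(\lambda,\kappa)$ satisfying $\lambda^q+\kappa^q=1$, should give a bound of the form
\begin{align*}
 \lambda\|z\|_p-\|z-f\|_p\le g(a),
\end{align*}
with $g$ explicit. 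The natural candidate is $g(a)=\lambda a^{1/p}-\kappa(1-a)^{1/p}$, possibly after optimizing the split as in the stronger bound of Lemma~\ref{lem:planar-quadrants}.

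It then remains to show $\sup\mathbb{E}[g(a)]\le0$ over distributions on $[0,1]$ with mean at most $\mu$, for the right choice of $\lambda$. Since $g$ is increasing, the mean constraint may be taken with equality. The supremum of $\mathbb{E}[g]$ is the upper concave envelope of $g$ evaluated at $\mu$, and it is attained by a two-point distribution. One point should be some $a\le\mu$ and the other the endpoint $1$, with weight $\frac{\mu-a}{1-a}$ on $1$. Setting the expected value of $g$ to zero and solving for $\kappa/\lambda$ should give exactly $R_{p,\mu}(a)$, scaled appropriately. The $(1-a)^{1/p}$ and $(\mu-a)/(1-\mu)$ terms in $R_{p,\mu}$ come from this normalization. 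Maximizing over $a$ gives $\Delta_p(\mu)$, and choosing $\kappa/\lambda=\Delta_p(\mu)$ makes every two-point value nonpositive. Finally, $\lambda=(1+\Delta^q)^{-1/q}$ gives the factor $\Gamma_p(\mu)$.

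The main obstacle is the per-agent inequality. The function $g$ must simultaneously be provable by H\"older in all $d$ coordinates and have a concave envelope whose two-point evaluation reproduces $R_{p,\mu}$ exactly. If the naive $g$ is not sharp enough, I would first try letting the H\"older weights depend on $a$. Alternatively, I would bound $\|z_S-f_S\|_p\ge\|z_S\|_p-a^{1/p}$ separately, mimicking the four-group computation in the proof of Lemma~\ref{lem:planar-cost}, and then check the envelope claim numerically before formalizing it.
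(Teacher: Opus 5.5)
Your route is the paper's: the scalar $a_i=\sum_{j\in S_i}f_j^p$ with the averaged constraint $\sum_iw_ia_i\le\mu$, a per-agent H\"older bound, and a two-point extremal distribution (Lemmas~\ref{lem:mean-z}, \ref{lem:orthant}, and~\ref{lem:convex-envelope}). Your choice of $S_i$, which counts every zero coordinate as nonpositive, works for both consistency and robustness, because only the inequality on the mean is used.

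The step you flag as the main obstacle is already complete. Put $T=\|z-f_S\|_p$. Your display together with H\"older gives $\|z-f\|_p\ge(T^p+1-a)^{1/p}\ge\lambda T+\kappa(1-a)^{1/p}$. Subtracting $\lambda\|z\|_p\le\lambda T+\lambda a^{1/p}$ yields exactly your naive $g$; this is Lemma~\ref{lem:orthant} with $K_\lambda=\kappa$. No optimization of the split is needed. Your fallback of $a$-dependent H\"older weights would not combine well anyway, since a single $\lambda$ must be factored out of $\sum_iw_i\|x_i\|_p$ at the end.

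The step that does need an argument, and which you only assert, is that the extremal two-point distribution has its upper atom at $1$. For an arbitrary increasing $g$ the upper atom could be any $b\in(\mu,1]$, and the resulting constant would not be $R_{p,\mu}$.

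The missing ingredient is the shape of $g$. With $s=\frac1p$, $g''(a)=s(1-s)\bigl(\kappa(1-a)^{s-2}-\lambda a^{s-2}\bigr)$ changes sign exactly once, at the point $b_0$ where $\bigl(\frac{a}{1-a}\bigr)^{s-2}=\frac{\kappa}{\lambda}$. So $g$ is concave on $[0,b_0]$ and convex on $[b_0,1]$. Given this, any finitely supported distribution with mean $\mu$ can be improved in two moves:
\begin{itemize}
\item split each atom in $(b_0,1)$ onto $\{b_0,1\}$, preserving its mean (this uses convexity);
\item merge all mass on $[0,b_0]$ into its barycenter (this uses concavity and Jensen).
\end{itemize}
The result is supported on $\{a,1\}$ with $a\le\mu$ and carries the weights you state.

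The paper proves the same reduction in Lemma~\ref{lem:convex-envelope} for $h_\delta=-g/\lambda$, by constructing the convex envelope explicitly through a tangency point $a_0$. The splitting argument reaches it more directly. With this supplied, your computation $\kappa/\lambda\ge R_{p,\mu}(a)$ and the choice $\lambda=(1+\Delta_p(\mu)^q)^{-1/q}$ complete the proof as in the paper.
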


We prove the two inequalities in \eqref{eq:high-upper} below.
The proof adapts the orthant relaxation of Gravin and Jia \cite{gravin2025approximation} to the biased coordinate medians induced by the prediction.
We give the full argument because the prediction changes the scalar constraint that determines the bound.

Translate the mechanism output to the origin and reflect coordinates so that an optimal facility $f$ is coordinate-wise nonnegative.
Reflection preserves the two median inequalities: at most half of the augmented weight lies strictly on either side in each coordinate.
We use these inequalities for the reflected data; we do not claim that reflection preserves the selected lower median when there is a tie.
For $f\ne0$, rescale all locations so that $\|f\|_p=1$; both costs scale by the same factor, leaving the approximation ratio unchanged.
The case $f=0$ is immediate.
Continue to normalize real weights to sum to one and denote them by $w_i$.
We may split an agent into several weighted copies at the same location and assign signs to their zero coordinates.
This changes neither social cost nor the coordinate-median inequalities.
For each copy a positive sign means a nonnegative coordinate, and a negative sign means a nonpositive coordinate; only zero coordinates admit either choice.
Write $[d]=\{1,\ldots,d\}$ for the coordinate indices.
For an agent $i$, let $S_i\subseteq[d]$ be the set of coordinates whose assigned sign is positive and define $z_i=\sum_{j\in S_i}f_j^p$.

\begin{lemma}\label{lem:mean-z}
In the consistency analysis, the zero coordinates can be split so that the weighted average $\sum_iw_i z_i$ is $\frac{1-c}{2}$.
In the robustness analysis, they can be split so that this average is at most $\frac{1+c}{2}$.
\end{lemma}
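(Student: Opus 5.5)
Since $\|f\|_p=1$ after normalization, we have $\sum_j f_j^p=1$, and the weighted average decomposes coordinate by coordinate:
\begin{align*}
 \sum_i w_i z_i=\sum_{j=1}^d f_j^p\,P_j,
 \qquad P_j=\sum_{i:\,j\in S_i}w_i .
\end{align*}
Here $P_j$ is the total real weight of copies assigned a positive sign in coordinate $j$. The plan is to control each $P_j$ separately (only coordinates with $f_j>0$ matter) and then average with the weights $f_j^p$. In each coordinate we may choose how much of the real weight sitting at $x_{ij}=0$ is labelled positive. Splitting agents into copies makes this choice continuous: any amount between $0$ and the full zero-mass can be labelled positive. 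This works coordinate by coordinate and independently across coordinates. Each copy's sign vector is determined by its labels in all coordinates, and a common refinement of the splits realizes any prescribed positive fraction of zero-mass in every coordinate simultaneously.

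For robustness, I label all zero coordinates negative. Then $P_j$ equals the real weight strictly on the positive side. That weight is at most the augmented weight strictly positive, which by the median inequality is at most $\frac{1+c}{2}$. Averaging gives $\sum_i w_iz_i\le\frac{1+c}{2}$.

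For consistency, $f=\widehat f$, so for every $j$ with $f_j>0$ the prediction weight $c$ lies strictly on the positive side. Let $\pi_j$, $\zeta_j$, $\nu_j$ denote the real weights with $x_{ij}>0$, $=0$, $<0$. The two median inequalities for the augmented weight give $\pi_j+c\le\frac{1+c}{2}$ and $\nu_j\le\frac{1+c}{2}$. These translate into $\pi_j\le\frac{1-c}{2}$ and $\pi_j+\zeta_j=1-\nu_j\ge\frac{1-c}{2}$. Hence $\frac{1-c}{2}\in[\pi_j,\pi_j+\zeta_j]$, and by labelling exactly $\frac{1-c}{2}-\pi_j$ of the zero-mass positive we get $P_j=\frac{1-c}{2}$ for every $j$ with $f_j>0$. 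Averaging with $\sum_j f_j^p=1$ yields exactly $\frac{1-c}{2}$. Coordinates with $f_j=0$ contribute nothing, so their labels are arbitrary.

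The only delicate point is the consistency side. One must check that the reflection used to make $f$ nonnegative preserves the two-sided median inequalities; the text already notes this. One must also check that the prediction weight lies strictly on the positive side exactly when $f_j>0$, which is needed for the lower inequality $\pi_j\le\frac{1-c}{2}$. The remaining steps are bookkeeping about simultaneous splitting, which the common-refinement remark handles.
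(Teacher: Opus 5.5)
Your proposal is correct and follows essentially the same argument as the paper. You interchange the sums to get $\sum_i w_i z_i=\sum_j f_j^p P_j$, label all zero coordinates negative for robustness, and for consistency use the two median inequalities to obtain $\pi_j\le\frac{1-c}{2}\le\pi_j+\zeta_j$ and then label exactly $\frac{1-c}{2}-\pi_j$ of the zero mass positive; your common-refinement remark plays the same role as the paper's coordinate-by-coordinate splitting, in which new copies inherit all previously assigned signs.
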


\begin{proof}
Consider a coordinate $j$ with $f_j>0$.
Let $P_j,N_j,Z_j$ be the real weights with strictly positive, strictly negative, and zero coordinates, respectively, so that $P_j+N_j+Z_j=1$.
Under a correct prediction, all prediction weight is strictly positive in this coordinate.
The median inequalities give
\begin{align*}
 P_j+c\le\frac{1+c}{2},\qquad N_j\le\frac{1+c}{2},
 \qquad\text{hence}\qquad
 P_j\le\frac{1-c}{2}\le1-N_j=P_j+Z_j.
\end{align*}
Assign positive sign to zero-coordinate mass $\frac{1-c}{2}-P_j$ and negative sign to the rest.
The displayed bounds show that the required mass lies between $0$ and $Z_j$.
These choices can be made coordinate by coordinate: whenever a copy must be split, give both new copies all previously assigned signs, so earlier signed marginal weights remain unchanged.
There are finitely many coordinates, so this procedure produces finitely many weighted copies.
Under an arbitrary prediction, assign every zero coordinate a negative sign.
The positive signed real mass then equals $P_j$, which is at most the augmented positive mass and hence at most $\frac{1+c}{2}$.
Coordinates with $f_j=0$ make no contribution to $z_i$.
Let $P_j^+$ denote the positive signed real mass after splitting.
Interchanging the two finite sums gives 
\begin{align*}
 \sum_iw_i z_i=\sum_iw_i\sum_{j\in S_i}f_j^p=\sum_jf_j^p P_j^+.
\end{align*}
Since $\sum_jf_j^p=\|f\|_p^p=1$, we obtain
\begin{align}
 \sum_iw_i z_i=\frac{1-c}{2}\quad\text{for consistency},
 \qquad
 \sum_iw_i z_i\le\frac{1+c}{2}\quad\text{for robustness}.
 \label{eq:mean-z}
\end{align}
\end{proof}

Weighted copies are used only in the analysis; they do not change the original instance or the mechanism output.

\begin{lemma}[Orthant inequality]\label{lem:orthant}
For every $\lambda\in(0,1)$, let $K_\lambda=\bigl(1-\lambda^{q}\bigr)^{\frac{1}{q}}$.
Then every agent satisfies
\begin{align}
 \|x_i-f\|_p-\lambda\|x_i\|_p
 \ge K_\lambda(1-z_i)^{\frac{1}{p}}-\lambda z_i^{\frac{1}{p}}.
 \label{eq:orthant}
\end{align}
\end{lemma}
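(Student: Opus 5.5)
Fix an agent $x=x_i$ with sign set $S=S_i$ and $z=z_i$. Write $u=x$ and split the coordinates into $S$ (assigned sign nonnegative) and $S^c=[d]\setminus S$ (assigned sign nonpositive), with $f_j\ge0$ throughout. The plan mirrors the planar Lemma~\ref{lem:planar-quadrants}: bound $\|x\|_p$ from above by the triangle inequality using the part of $f$ on $S$, and bound $\|x-f\|_p$ from below using the part of $f$ on $S^c$, then combine with H\"older against $(\lambda,K_\lambda)$, which satisfies $\lambda^q+K_\lambda^q=1$.

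\textbf{Upper bound on $\|x\|_p$.} Let $f^S$ denote the vector equal to $f$ on $S$ and zero elsewhere, so $\|f^S\|_p=z^{1/p}$. Put $T=\|x-f^S\|_p$. The triangle inequality gives $\|x\|_p\le T+z^{1/p}$.

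\textbf{Lower bound on $\|x-f\|_p$.} On $S$, the $j$-th coordinate of $x-f$ coincides with that of $x-f^S$. On $S^c$, we have $x_j\le0\le f_j$, so $|x_j-f_j|=|x_j|+f_j$. Using $(s+t)^p\ge s^p+t^p$ coordinatewise on $S^c$ gives
\begin{align*}
 \|x-f\|_p^p\ge\sum_{j\in S}|x_j-f_j|^p+\sum_{j\notin S}\bigl(|x_j|^p+f_j^p\bigr)=T^p+(1-z),
\end{align*}
since $\sum_{j\notin S}f_j^p=1-z$ under the normalization $\|f\|_p=1$.

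\textbf{Combination.} H\"older's inequality for $(T,(1-z)^{1/p})$ and $(\lambda,K_\lambda)$ yields
\begin{align*}
 \bigl(T^p+1-z\bigr)^{1/p}\ge\lambda T+K_\lambda(1-z)^{1/p}.
\end{align*}
Therefore
\begin{align*}
 \|x-f\|_p-\lambda\|x\|_p\ge\lambda T+K_\lambda(1-z)^{1/p}-\lambda\bigl(T+z^{1/p}\bigr)=K_\lambda(1-z)^{1/p}-\lambda z^{1/p},
\end{align*}
which is \eqref{eq:orthant}.

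\textbf{Main obstacle.} The only delicate point is the sign bookkeeping for the zero coordinates that were split with an assigned sign. For $j\in S$, no sign information is needed at all, since that part of the argument is just the triangle inequality. For $j\notin S$, we need $x_j\le0$, which the assignment guarantees: negative means nonpositive. Coordinates with $f_j=0$ are harmless in either set. So the argument works for each weighted copy exactly as written.
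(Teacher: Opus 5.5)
Your proof is correct and follows the paper's argument: the triangle inequality bounds $\|x_i\|_p$ from above, superadditivity of $p$-th powers on $S^c$ bounds $\|x_i-f\|_p$ from below, and H\"older's inequality against $(\lambda,K_\lambda)$ combines the two. The only difference is the intermediate quantity: you take $T=\|x_i-f^S\|_p$ directly in $\R^d$, whereas the paper first compresses to $u=\|x_{i,S}\|_p$ and $v=\|x_{i,S^c}\|_p$ and uses the reverse triangle inequality to work with the smaller $T=(|u-a|^p+v^p)^{1/p}$; your choice skips that step and mirrors Lemma~\ref{lem:planar-quadrants} more closely.
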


\begin{proof}
For $S\subseteq[d]$, write $S^c=[d]\setminus S$ and let $x_S$ denote the subvector of $x$ consisting of the coordinates in $S$.
Fix $i$ and write $S=S_i$, $a=\|f_S\|_p=z_i^{\frac{1}{p}}$, $b=\|f_{S^c}\|_p=(1-z_i)^{\frac{1}{p}}$, $u=\|x_{i,S}\|_p$, and $v=\|x_{i,S^c}\|_p$.
On $S$, the reverse triangle inequality gives $\|x_{i,S}-f_S\|_p\ge|u-a|$.
On $S^c$, we have $x_{ij}\le0$ and $f_j\ge0$, including any zeros assigned a negative sign.
Thus $|x_{ij}-f_j|=|x_{ij}|+f_j$, and
\begin{align*}
 \|x_{i,S^c}-f_{S^c}\|_p^p
 =\sum_{j\in S^c}(|x_{ij}|+f_j)^p
 \ge\sum_{j\in S^c}(|x_{ij}|^p+f_j^p)=v^p+b^p.
\end{align*}
Let $T=\bigl(|u-a|^p+v^p\bigr)^{\frac{1}{p}}$.
Adding the estimates on $S$ and $S^c$ gives 
\begin{align*}
 \|x_i-f\|_p^p\ge|u-a|^p+v^p+b^p=T^p+b^p.
\end{align*}
For the cost at the origin, apply the triangle inequality to the two-dimensional decomposition $(u,v)=(a,0)+(u-a,v)$:
\begin{align*}
 \|x_i\|_p=(u^p+v^p)^{\frac{1}{p}}
 \le a+\bigl(|u-a|^p+v^p\bigr)^{\frac{1}{p}}=a+T.
\end{align*}
Finally, H\"older's inequality applied to $(T,b)$ and $(\lambda,K_\lambda)$ gives
\begin{align*}
 \lambda T+K_\lambda b
 \le(T^p+b^p)^{\frac{1}{p}}(\lambda^q+K_\lambda^q)^{\frac{1}{q}}
 =(T^p+b^p)^{\frac{1}{p}}.
\end{align*}
Combining the three estimates and substituting the definitions of $a$ and $b$, we obtain
\begin{align*}
 \|x_i-f\|_p-\lambda\|x_i\|_p
 \ge(T^p+b^p)^{\frac{1}{p}}-\lambda(a+T)
 \ge K_\lambda b-\lambda a
 =K_\lambda(1-z_i)^{\frac{1}{p}}-\lambda z_i^{\frac{1}{p}}.
\end{align*}
Empty $S$ or $S^c$ causes no difficulty: its subvector has norm zero, and the same inequalities apply.
\end{proof}

Set $\delta=\frac{K_\lambda}{\lambda}$ and define $h_\delta(z)=\delta(1-z)^{\frac{1}{p}}-z^{\frac{1}{p}}$.
Summing \eqref{eq:orthant} shows that it is enough to make the average of $h_\delta(z_i)$ nonnegative.
The next lemma gives the exact condition.

\begin{lemma}\label{lem:convex-envelope}
Fix $1<p<+\infty$, $\delta>0$, and $\mu\in(0,1)$.
Let $z_i\in[0,1]$ have weights $w_i\ge0$ with $\sum_iw_i=1$ and $\sum_iw_i z_i\le\mu$.
Then $\sum_iw_i h_\delta(z_i)\ge0$ whenever $\delta\ge\Delta_p(\mu)$.
Conversely, if $\delta<\Delta_p(\mu)$, there is a distribution supported on at most two points with mean $\mu$ for which the average of $h_\delta$ is negative.
\end{lemma}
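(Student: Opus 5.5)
The plan is a tangent-line (supporting-line) argument through the point $z=1$. First, $h_\delta$ is concave on $[0,1]$, since both $(1-z)^{1/p}$ and $z^{1/p}$ are concave and we subtract the second; it is strictly concave for $1<p<+\infty$. We also have $h_\delta(1)=-1$ and $h_\delta(0)=\delta>0$. With the mean constrained, the minimum of $\sum_i w_i h_\delta(z_i)$ over distributions on $[0,1]$ is governed by the lower convex envelope of $h_\delta$. For a concave function, that envelope on $[0,1]$ is the chord from $(0,\delta)$ to $(1,-1)$. This would give value $\delta-(1+\delta)\mu$ at mean $\mu$.

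That chord bound is too pessimistic, however, because the optimal adversary should not be restricted to endpoint support. So I would instead work directly with two-point distributions. Both the objective and the constraint are linear in the distribution, so by a Carathéodory/extreme-point argument on the compact set of measures on $[0,1]$ with mean at most $\mu$, the minimum is attained at a distribution supported on at most two points.

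Since $h_\delta$ is decreasing near $1$ in the relevant regime, I can push the mean up to exactly $\mu$ when that is helpful, or treat the mean-at-most-$\mu$ case by monotonicity. One point must lie in $[0,\mu]$ and the other in $[\mu,1]$. By concavity, a point strictly inside an interval can be split toward the interval's endpoints to decrease the average. This suggests the minimizer puts mass at some $a\le\mu$ and at $1$.

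\textbf{Key step.} For the distribution with weight $\theta$ at $1$ and $1-\theta$ at $a$, the mean condition gives $\theta=(\mu-a)/(1-a)$. The average is then
\begin{align*}
\frac{1-\mu}{1-a}\bigl(\delta(1-a)^{1/p}-a^{1/p}\bigr)-\frac{\mu-a}{1-a}.
\end{align*}
This is nonnegative iff $\delta(1-a)^{1/p}\ge a^{1/p}+(\mu-a)/(1-\mu)$, that is, iff $\delta\ge R_{p,\mu}(a)$. Requiring this for all $a\in[0,\mu]$ is exactly $\delta\ge\Delta_p(\mu)$. This gives both directions at once. For the converse, if $\delta<\Delta_p(\mu)$, pick a maximizing $a$ (it exists by continuity on $[0,\mu]$) to get a two-point distribution with mean $\mu$ and negative average.

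\textbf{Main obstacle.} The hardest part is justifying rigorously that the worst case is a point $a\le\mu$ together with the point $1$. I would argue it this way. Take any distribution, and let $\ell$ be the line through $(1,h_\delta(1))$ that supports the function $\psi(z)=$ the minimal average over two-point laws. More concretely, I would show $h_\delta(z)\ge \ell(z)$ for all $z$, where $\ell$ is the lower convex envelope of the function equal to $h_\delta$ on $[0,\mu]$ and to its chord toward $1$ beyond. The cleaner route is to note that for $z>\mu$, concavity gives $h_\delta(z)\ge$ the chord value between $a^\ast$ and $1$. This lets every mass above $\mu$ be replaced by mass at $1$ plus mass near $\mu$, and every mass below $\mu$ be grouped, without increasing the average. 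After this reduction, apply Jensen-type averaging along chords to the point $1$. The mean-at-most-$\mu$ relaxation is handled by observing that the resulting bound $\min_a$ of the affine-in-mean expression is nonincreasing in the mean, so mean exactly $\mu$ is worst.
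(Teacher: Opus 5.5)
\textbf{Gap: the forward direction rests on a false premise.} Your two-point computation and the converse direction are correct and match the paper. So are the Carath\'eodory reduction and the passage from mean at most $\mu$ to mean $\mu$; in fact $h_\delta$ is strictly decreasing on all of $[0,1]$. The problem is that $h_\delta$ is \emph{not} concave. The term $-z^{1/p}$ is convex, and with $s=1/p$ one has $h_\delta''(z)=s(1-s)\bigl(z^{s-2}-\delta(1-z)^{s-2}\bigr)$. So $h_\delta$ is convex on $(0,b_0)$ and concave on $(b_0,1)$ for a unique inflection point $b_0$.

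Your own formula exposes this. The law on $\{0,1\}$ has average $(1-\mu)\bigl(\delta-R_{p,\mu}(0)\bigr)$, which is the chord value. Yet $\Delta_p(\mu)>R_{p,\mu}(0)=\mu/(1-\mu)$, because $R_{p,\mu}'(a)\to+\infty$ as $a\downarrow0$. Hence at $\delta=\Delta_p(\mu)$ the law on $\{a^*,1\}$ has average $0$, strictly below the chord value. A concave function cannot do that. Under concavity the chord bound would be exact; in reality it overestimates the minimum.

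\textbf{Consequences for your reduction steps.} Each step inherits this error. Splitting a point toward the endpoints of an interval lowers the average only inside the concave part. Applied on $[0,\mu]$, it would push mass to $0$ and give the wrong threshold $\mu/(1-\mu)$. Your inequality ``$h_\delta(z)$ lies above the chord through $a^*$ and $1$ for $z>\mu$'' is false in general. If $\mu<p^{-p/(p-1)}$ and $\delta=\Delta_p(\mu)$, then $a^*=\mu$ and $h_\delta'(\mu)<\frac{h_\delta(1)-h_\delta(\mu)}{1-\mu}$. So $h_\delta$ dips below that chord just to the right of $\mu$; for example, $p=2$, $\mu=0.1$, $z=0.2$. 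The inequality does hold when $a^*$ is the point where the tangent passes through $(1,h_\delta(1))$. Even then it does not follow from concavity, because $[a^*,1]$ contains part of the convex region.

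\textbf{The missing idea} is an explicit lower convex envelope built from this convex--concave shape.
\begin{itemize}
\item Let $a_0$ be the zero of $Q(a)=h_\delta(a)+(1-a)h_\delta'(a)-h_\delta(1)$, i.e., the point where the tangent $\ell$ to $h_\delta$ passes through $(1,h_\delta(1))$. It exists and is unique in $(0,b_0)$, since $Q'(a)=(1-a)h_\delta''(a)$, $Q\to-\infty$ as $a\downarrow0$, and $Q\to0$ as $a\uparrow1$.
\item The difference $D=h_\delta-\ell$ satisfies $D(a_0)=D(1)=0$ and $D'(a_0)=0$. It increases and then decreases to $0$ on $[a_0,1]$, so $D\ge0$ there.
\item Then $g=h_\delta$ on $[0,a_0]$ and $g=\ell$ on $[a_0,1]$ is a convex, decreasing minorant of $h_\delta$. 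With $m=\sum_iw_iz_i$, Jensen gives $\sum_iw_ih_\delta(z_i)\ge g(m)\ge g(\mu)$.
\item Finally, $g(\mu)$ equals your two-point expression at $a=\min\{\mu,a_0\}$, which is nonnegative when $\delta\ge\Delta_p(\mu)$.
\end{itemize}
Alternatively, keep your reduction to two points $z_1\le\mu\le z_2$. Use the same shape analysis to show that the chord slope from $z_1$ is unimodal (increasing, then decreasing) in $z_2$. Then the worst $z_2\in[\mu,1]$ is $\mu$ or $1$, and both lie in your family. Either way, an argument beyond concavity is indispensable.
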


\begin{proof}
Write $s=\frac{1}{p}\in(0,1)$.
Direct differentiation gives
\begin{align*}
 h_\delta'(z)=-s\delta(1-z)^{s-1}-sz^{s-1}<0,
 \qquad
 h_\delta''(z)=s(1-s)\bigl(z^{s-2}-\delta(1-z)^{s-2}\bigr).
\end{align*}
Here $\delta>0$, as in its definition above.
The equation $h_\delta''(z)=0$ is equivalent to $\bigl(\frac{z}{1-z}\bigr)^{s-2}=\delta$.
Its left-hand side decreases strictly from $+\infty$ to $0$, so there is a unique solution $b_0\in(0,1)$.
Consequently $h_\delta$ is strictly convex on $(0,b_0)$ and strictly concave on $(b_0,1)$.
To identify the lower convex envelope, define
\begin{align*}
 Q(a)=h_\delta(a)+(1-a)h_\delta'(a)-h_\delta(1).
\end{align*}
Differentiating and cancelling the two first-derivative terms gives $Q'(a)=(1-a)h_\delta''(a)$.
It follows that $Q$ increases up to $b_0$ and decreases afterward.
To check the endpoint signs, expand its definition:
\begin{align*}
 Q(a)=1-a^s-s(1-a)a^{s-1}+\delta(1-s)(1-a)^s.
\end{align*}
As $a\downarrow0$, the term $-s(1-a)a^{s-1}$ tends to $-\infty$, while all other terms remain bounded.
As $a\uparrow1$, the expanded expression gives $Q(a)\to0$: $1-a^s\to0$, $(1-a)a^{s-1}\to0$, and $(1-a)^s\to0$.
Since $Q$ is strictly decreasing on $(b_0,1)$, it is positive throughout this interval, and in particular at $b_0$.
Thus it has a unique zero $a_0\in(0,b_0)$.

Let $\ell$ be the line through $(a_0,h_\delta(a_0))$ and $(1,h_\delta(1))$.
The equation $Q(a_0)=0$ says exactly that its slope is $h_\delta'(a_0)$.
For $D(z)=h_\delta(z)-\ell(z)$, we therefore have $D(a_0)=D(1)=0$ and $D'(a_0)=0$.
On $(a_0,b_0)$, $D'$ is increasing and positive.
On $(b_0,1)$, it decreases strictly to $-\infty$, so it crosses zero exactly once.
Thus $D$ first increases and then decreases to zero, proving $D(z)\ge0$ on $[a_0,1]$.

Define $g=h_\delta$ on $[0,a_0]$ and $g=\ell$ on $[a_0,1]$.
The first part is convex, and its derivative at $a_0$ equals the slope of the second part, so $g$ is convex on the whole interval.
Also $g\le h_\delta$.
For any other convex function $\widetilde g\le h_\delta$ and any $z\in[a_0,1]$, convexity gives
\begin{align*}
 \widetilde g(z)\le\frac{1-z}{1-a_0}\widetilde g(a_0)
 +\frac{z-a_0}{1-a_0}\widetilde g(1)
 \le\frac{1-z}{1-a_0}h_\delta(a_0)+\frac{z-a_0}{1-a_0}h_\delta(1)=g(z).
\end{align*}
On $[0,a_0]$, the inequality $\widetilde g\le g$ follows directly from $g=h_\delta$.
Hence $g$ is the lower convex envelope, that is, the greatest convex function bounded above by $h_\delta$.
Equivalently, at a prescribed mean $\mu$ its value is
\begin{align*}
 \min_{0\le a\le\mu}
 \left\{\frac{1-\mu}{1-a}h_\delta(a)
 +\frac{\mu-a}{1-a}h_\delta(1)\right\}.
\end{align*}
To see this formula, a distribution supported at $a$ and $1$ with mean $\mu$ must have weights $\frac{1-\mu}{1-a}$ and $\frac{\mu-a}{1-a}$.
Its expected value is at least $g(\mu)$ by convexity.
Equality is attained by $a=\mu$ when $\mu\le a_0$, and by $a=a_0$ when $\mu>a_0$.
Since $h_\delta(1)=-1$, the expression indexed by $a$ equals
\begin{align*}
 \frac{(1-\mu)\bigl(\delta(1-a)^{1/p}-a^{1/p}\bigr)-(\mu-a)}{1-a}.
\end{align*}
Both $1-a$ and $1-\mu$ are positive, so this is nonnegative exactly when
\begin{align*}
 \delta\ge\frac{a^{1/p}+(\mu-a)/(1-\mu)}{(1-a)^{1/p}}=R_{p,\mu}(a).
\end{align*}
It is nonnegative for every $a\in[0,\mu]$ exactly when $\delta\ge\Delta_p(\mu)$.
The function $g$ is decreasing: it agrees with the decreasing function $h_\delta$ before $a_0$ and then has constant negative slope $h_\delta'(a_0)$.
For any weights $w_i\ge0$ summing to one and mean $m=\sum_iw_i z_i\le\mu$, Jensen's inequality gives
\begin{align*}
 \sum_iw_i h_\delta(z_i)\ge\sum_iw_i g(z_i)
 \ge g\left(\sum_iw_i z_i\right)=g(m)\ge g(\mu)\ge0.
\end{align*}
This proves the first statement.
For the converse, choose an $a\in[0,\mu]$ with $\delta<R_{p,\mu}(a)$ and put the two displayed weights at $a$ and $1$.
They are nonnegative, sum to one, and have mean $\frac{a(1-\mu)+\mu-a}{1-a}=\mu$.
The expected value is negative by the preceding calculation.
If $a=\mu$, the weight at $1$ is zero, so the distribution is supported on just one point.
\end{proof}

\begin{proof}[Proof of Theorem~\ref{thm:high-upper}]
Fix $\mu$ and take $\delta=\Delta_p(\mu)>0$ and $\lambda=\bigl(1+\delta^q\bigr)^{-\frac{1}{q}}\in(0,1)$.
These choices give $K_\lambda^q=1-\lambda^q=\frac{\delta^q}{1+\delta^q}=\delta^q\lambda^q$, so $K_\lambda=\delta\lambda$.
By Lemmas~\ref{lem:orthant} and \ref{lem:convex-envelope},
\begin{align*}
 \SC_p(\mathbf{x},f)-\lambda\SC_p(\mathbf{x},0)
 \ge \lambda\sum_i w_i h_\delta(z_i)\ge0.
\end{align*}
Here Lemma~\ref{lem:mean-z} supplies the mean constraint, and Lemma~\ref{lem:convex-envelope} makes the last sum nonnegative.
Since $f$ is optimal, division by $\lambda\OPT_p(\mathbf{x})>0$ gives
\begin{align*}
 \frac{\SC_p(\mathbf{x},0)}{\OPT_p(\mathbf{x})}
 \le\frac{1}{\lambda}=\big(1+\Delta_p(\mu)^q\big)^{\frac{1}{q}}=\Gamma_p(\mu).
\end{align*}
Use $\mu=\frac{1-c}{2}$ for consistency and $\mu=\frac{1+c}{2}$ for robustness, as supplied by the two relations in \eqref{eq:mean-z}.
\end{proof}

The maximization in \eqref{eq:high-gamma} is one-dimensional and has a simple characterization.

\begin{proposition}\label{prop:maximizer}
Fix $1<p<+\infty$ and $\mu\in(0,1)$, and let $\overline a_p(\mu)$ be the unique solution in $(0,1)$ of
\begin{align}
 (1-\mu)a^{-\frac{p-1}{p}}+(p-1)a+\mu-p=0.
 \label{eq:stationary}
\end{align}
Then a maximizer of $R_{p,\mu}$ is $a_p^*(\mu)=\min\{\mu,\overline a_p(\mu)\}$.
In particular, the maximizer is the endpoint $\mu$ exactly when $\mu\le p^{-\frac{p}{p-1}}$.
\end{proposition}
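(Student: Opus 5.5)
The plan is to differentiate $R_{p,\mu}$ on $(0,1)$, show that the sign of $R_{p,\mu}'(a)$ equals the sign of the left-hand side of \eqref{eq:stationary}, and then analyze that left-hand side as a convex function of $a$. Write $s=\frac1p$ and $N(a)=a^{s}+\frac{\mu-a}{1-\mu}$, so that $R_{p,\mu}(a)=N(a)(1-a)^{-s}$. The quotient rule gives
\begin{align*}
 R_{p,\mu}'(a)=(1-a)^{-s-1}\bigl(N'(a)(1-a)+sN(a)\bigr).
\end{align*}
Substituting $N'(a)=sa^{s-1}-\frac{1}{1-\mu}$ and multiplying by the positive factor $\frac{1-\mu}{s}=p(1-\mu)$, the bracket has the same sign as
\begin{align*}
 \varphi(a)=(1-\mu)a^{-\frac{p-1}{p}}+(p-1)a+\mu-p .
\end{align*}
This is exactly the left-hand side of \eqref{eq:stationary}. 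The algebra is routine: the terms in $a^{s}$ cancel, and the constant terms combine to $p-pa-\mu+a$ inside the bracket.

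Next I study $\varphi$ on $(0,1]$. It is strictly convex, being a positive multiple of a negative power of $a$ plus an affine function. It satisfies $\varphi(a)\to+\infty$ as $a\downarrow0$, and $\varphi(1)=(1-\mu)+(p-1)+\mu-p=0$. Its derivative at $1$ is
\begin{align*}
 \varphi'(1)=-(1-\mu)\tfrac{p-1}{p}+(p-1)=(p-1)\Bigl(1-\tfrac{1-\mu}{p}\Bigr)>0,
\end{align*}
so $\varphi<0$ just to the left of $1$. A strictly convex function has at most two zeros, and $a=1$ is one of them. Hence $\varphi$ has exactly one zero $\overline a_p(\mu)\in(0,1)$, with $\varphi>0$ on $(0,\overline a_p(\mu))$ and $\varphi<0$ on $(\overline a_p(\mu),1)$. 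This gives the uniqueness asserted in the statement.

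It follows that $R_{p,\mu}$ is strictly increasing on $[0,\overline a_p(\mu)]$ and strictly decreasing on $[\overline a_p(\mu),1)$. Here I use that $R_{p,\mu}$ is continuous on $[0,\mu]$, including at $a=0$ where $a^{1/p}$ is continuous, so monotonicity on the open interval extends to the endpoint. Restricted to $[0,\mu]$, the maximum is therefore attained at $\mu$ if $\mu\le\overline a_p(\mu)$, and at $\overline a_p(\mu)$ otherwise. That is, the maximizer is $a_p^*(\mu)=\min\{\mu,\overline a_p(\mu)\}$.

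For the final claim, the sign pattern of $\varphi$ shows that $\mu\le\overline a_p(\mu)$ holds exactly when $\varphi(\mu)\ge0$. The key simplification is
\begin{align*}
 \varphi(\mu)=(1-\mu)\mu^{-\frac{p-1}{p}}+p\mu-p=(1-\mu)\bigl(\mu^{-\frac{p-1}{p}}-p\bigr).
\end{align*}
This is nonnegative iff $\mu^{\frac{p-1}{p}}\le\frac1p$, i.e.\ iff $\mu\le p^{-\frac{p}{p-1}}$.

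The only step needing care is the derivative computation reducing to $\varphi$. The sign analysis of $\varphi$ should go through cleanly, because convexity together with the known zero at $a=1$ and $\varphi'(1)>0$ pins down the single sign change.
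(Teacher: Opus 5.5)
Your proposal is correct and follows essentially the same route as the paper: the same reduction of the sign of $R_{p,\mu}'$ to the left-hand side of \eqref{eq:stationary}, the same boundary data ($\varphi\to+\infty$ at $0$, $\varphi(1)=0$, $\varphi'(1)>0$), and the same factorization $\varphi(\mu)=(1-\mu)\bigl(\mu^{-\frac{p-1}{p}}-p\bigr)$; your ``strictly convex, hence at most two zeros'' is a rephrasing of the paper's ``$F'$ strictly increasing, hence a single minimum.'' One cosmetic slip: after scaling by $p(1-\mu)$ the non-power terms are $-(p-pa-\mu+a)$ rather than $p-pa-\mu+a$, but your final expression for $\varphi$ is right.
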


\begin{proof}
Let $F(a)$ denote the left-hand side of \eqref{eq:stationary}.
For $0<a<1$, differentiating the numerator and denominator gives
\begin{align*}
 R_{p,\mu}'(a)=\frac{(1-\mu)a^{-(p-1)/p}+(p-1)a+\mu-p}{p(1-\mu)(1-a)^{1+1/p}}
 =\frac{F(a)}{p(1-\mu)(1-a)^{1+1/p}}.
\end{align*}
Thus $R_{p,\mu}'$ and $F$ have the same sign, since the denominator is positive.
Moreover, $F(a)\to+\infty$ as $a\downarrow0$, $F(1)=0$, and
\begin{align*}
 F'(a)=(p-1)\left(1-\frac{1-\mu}{p}a^{-\frac{2p-1}{p}}\right)
\end{align*}
is strictly increasing from $-\infty$ to a positive value at one.
In particular, $F'(1)=(p-1)\bigl(1-\frac{1-\mu}{p}\bigr)>0$, so $F(a)<F(1)=0$ just to the left of one.
The strict increase of $F'$ implies a single minimum of $F$, and the limit $F(a)\to+\infty$ at zero then gives the unique earlier crossing.
Denote this root by $\overline a_p(\mu)$; $F$ is positive before it and negative between it and $1$.
Consequently $R_{p,\mu}$ increases on $(0,\overline a_p(\mu))$ and decreases on $(\overline a_p(\mu),1)$.
Continuity at zero shows that its maximum on $[0,\mu]$ occurs at $\min\{\mu,\overline a_p(\mu)\}$, which is strictly positive.
Finally, $F(\mu)=(1-\mu)\bigl(\mu^{-\frac{p-1}{p}}-p\bigr)$, which is nonnegative exactly when $\mu\le p^{-\frac{p}{p-1}}$.
This is precisely the condition that the interval $[0,\mu]$ ends before the interior maximizer.
\end{proof}

For $p=2$, factoring the cubic obtained from \eqref{eq:stationary} gives $\overline a_2(\mu)=\bigl(\frac{\sqrt{5-4\mu}-1}{2}\bigr)^2$.
Substitution in \eqref{eq:high-gamma} recovers the piecewise Euclidean bounds of Gravin and Jia \cite{gravin2025approximation}.
We next establish the upper bounds for the endpoint norms.

\begin{theorem}\label{thm:endpoints}
For every dimension $d$ and $c\in[0,1)$, $\alpha_{1,d}(c)=1$ and $\beta_{1,d}(c)=\frac{1+c}{1-c}$.
Moreover,
\begin{align}
 \alpha_{\infty,d}(c)\le\frac{3+c}{1+c},
 \qquad
 \beta_{\infty,d}(c)\le\frac{3-c}{1-c},
 \label{eq:infty-upper}
\end{align}
and both inequalities are asymptotically tight as $d\to\infty$.
\end{theorem}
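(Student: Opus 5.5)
For $p=1$ the plan is direct. The $\ell_1$ social cost separates across coordinates, $\SC_1(\mathbf{x},y)=\sum_j\sum_i w_i|x_{ij}-y_j|$. Hence an optimum minimizes each coordinate separately, and every optimal facility $f$ has $f_j$ equal to a (weighted) median of the real reports in coordinate $j$.

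\emph{Consistency.} When $\widehat f=f$ is optimal, each coordinate's augmented median is a median of the real reports. The point $f_j$ already balances the real mass, and adding mass $c$ at $f_j$ keeps $f_j$ a median while not moving the output off the real median interval. So each coordinate cost is optimal and $\alpha_{1,d}(c)=1$.

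\emph{Robustness.} I would apply the one-dimensional argument in each coordinate. Fix $j$ and let the output be $0$ with $f_j>0$. The real weight strictly above $0$ is at most $\frac{1+c}{2}$, so at least $\frac{1-c}{2}$ of the real weight lies at or below $0$. Points $x\le0$ pay $|x-f_j|=|x|+f_j$ at $f_j$, while points above pay at most $f_j$ more at $0$ than at $f_j$. Comparing $\sum w_i|x_{ij}|$ with $\sum w_i|x_{ij}-f_j|$ gives ratio at most $\frac{\mu}{1-\mu}$ with $\mu=\frac{1+c}{2}$ (the standard line bound). Alternatively I would take $p\downarrow1$ in Lemma~\ref{lem:planar-cost}, whose argument extends verbatim to $d$ coordinates with one side constraint per coordinate. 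Summing over coordinates gives $\frac{1+c}{1-c}$. The matching lower bound is the planar two-location $\ell_1$ instance from the tightness proof of Theorem~\ref{thm:planar}, embedded in $\R^d$ by zero padding.

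For $p=\infty$ I would follow the proof of Theorem~\ref{thm:high-upper} with the orthant inequality adapted to the max norm. Normalize the output to $0$, take $f\ge0$ optimal with $\|f\|_\infty=1$, and note that only the coordinates $J=\{j:f_j=1\}$ matter. Rather than weighting by $f_j^p$, I would pick one coordinate $j^*\in J$ and let $z_i\in\{0,1\}$ indicate whether agent $i$ is positive in $j^*$. The median condition gives $\sum w_iz_i\le\mu$, with $\mu=\frac{1-c}{2}$ for consistency (after subtracting the prediction weight, which is strictly positive there) and $\mu=\frac{1+c}{2}$ for robustness. An agent with $x_{ij^*}\le0$ satisfies $\|x_i-f\|_\infty\ge|x_{ij^*}|+1$, and every agent satisfies $\|x_i\|_\infty\le\|x_i-f\|_\infty+1$. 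These two facts alone only give a weak bound, because an agent with $x_{ij^*}\le 0$ might have its max-norm at $0$ attained in another coordinate. So I would instead use the limiting form of Lemma~\ref{lem:orthant}. As $q\to1$, $K_\lambda=1-\lambda$, and the inequality becomes
\[
\|x_i-f\|_\infty-\lambda\|x_i\|_\infty\ge(1-\lambda)\,[\,z_i<1\,]-\lambda\,[\,z_i>0\,],
\]
where $z_i$ now means the max over the positive set. Here $\|f_S\|_\infty\in\{0,1\}$ and the complementary part is $0$ or $1$. The mean constraint is the obstacle: with the max norm, a single agent's positive set may contain some of the top coordinates but not others.

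The key step is therefore to average over $j^*\in J$ uniformly, or over any probability weighting $\pi$ of $J$. Writing $t_i=\pi(S_i\cap J)$, this gives $\sum_iw_it_i\le\mu$. Each agent then satisfies $\|x_i-f\|_\infty\ge1$ whenever $t_i<1$ (some top coordinate is nonpositive), $\|x_i\|_\infty\le\|x_i-f\|_\infty+1$ always, and $\|x_i\|_\infty\le\|x_i-f\|_\infty-1$ when $t_i=0$. Let $A$ be the weight with $t_i=1$, $B$ the weight with $0<t_i<1$, and $C$ the weight with $t_i=0$. The mean constraint gives $A\le\mu$. Writing $O=\OPT$, the cost at $0$ is at most $O+A+B-C$, while $O\ge B+C$. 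The worst case puts the remaining weight $1-A$ into $B$, subject to $A+\epsilon B\le\mu$; as $\epsilon\to0$ this gives $A=\mu$, $B=1-\mu$. Then $O\ge 1-\mu$, but $O$ could be larger. This yields the ratio $\frac{O+1}{O}\le 1+\frac1{1-\mu}=\frac{2-\mu}{1-\mu}$, which equals $\frac{3+c}{1+c}$ for $\mu=\frac{1-c}{2}$ and $\frac{3-c}{1-c}$ for $\mu=\frac{1+c}{2}$, matching \eqref{eq:infty-upper}. I expect making this bookkeeping rigorous, in particular bounding $O$ from below by the weight not fully positive on $J$, to be the main technical point.

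Finally, asymptotic tightness. I would build symmetric instances in which $f=\one$ and a mass $\approx\mu$ of agents sits at $\one$. The remaining mass $1-\mu$ is spread over the points $-e_j$ for $j\in[d]$ (or $e_j$ patterns), chosen so that each coordinate median is $0$ and the mass on the positive side is exactly $\mu$. Optimality of $\one$ as $d\to\infty$ would be checked by a symmetry-plus-convexity argument: the optimum may be taken on the diagonal, and then a one-variable minimization applies. This shows the ratio tends to $\frac{2-\mu}{1-\mu}$. The prediction is placed at $\one$ for consistency and at $0$ for robustness, exactly as in the planar tightness proof.
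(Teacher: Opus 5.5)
Your $p=1$ part is sound: you use the one-dimensional median comparison directly, where the paper takes the $p\downarrow1$ limit of the planar cost comparison coordinate by coordinate. The genuine gap is the $p=\infty$ tightness construction. As you describe it, it does not reach $\frac{2-\mu}{1-\mu}$.

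\begin{itemize}
\item If the remaining mass sits at $-e_j$, each such point is at distance $1$ from the output but $2$ from $\one_d$. So $\SC_\infty(\mathbf{x},0)=1$ while $\SC_\infty(\mathbf{x},\one_d)=2(1-\mu)$, and $\one_d$ is not even optimal for $\mu<\frac12$.
\item If it sits at $e_j$, each point is at distance $1$ from both $0$ and $\one_d$, and $\frac12\one_d$ costs only $\frac12$. The ratio then tends to $\max\{2,\frac{1}{1-\mu}\}$, for example $2$ instead of $3$ at $c=0$.
\end{itemize}

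The missing idea comes from the equality cases of the upper bound. Every agent must satisfy $\|x_i\|_\infty=\|x_i-f\|_\infty+\|f\|_\infty$. The agents that are nonpositive in a top coordinate must be at distance exactly $\|f\|_\infty$ from $f$. Together these force points such as $2e_j$, which are at distance $1$ from $\one_d$ but $2$ from the output. The paper uses exactly this:
\begin{itemize}
\item it puts mass $\theta_d=\frac{1-\mu}{1-1/d}$ uniformly on $2e_1,\ldots,2e_d$ and mass $\phi_d=\frac{d\mu-1}{d-1}$ at $\one_d$;
\item it proves $\OPT_\infty(\mathbf{x})=\theta_d$ over all $y$ via $\|y-2e_j\|_\infty+\|y-2e_k\|_\infty\ge2$;
\item it obtains the ratio $1+\frac{1-1/d}{1-\mu}\to\frac{2-\mu}{1-\mu}$.
\end{itemize}
Your symmetry-plus-diagonal argument would also certify optimality for this instance, since $\max\{|s-2|,|s|\}\ge1$.

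Your $p=\infty$ upper bound reaches the right constant, but you dismiss the argument that actually proves it. The two facts you set aside already suffice.
\begin{itemize}
\item Choose $j^*$ with $f_{j^*}=\|f\|_\infty$. Real weight at least $1-\mu$ has $x_{ij^*}\le0$, and each such agent pays at least $\|f\|_\infty$ at $f$. Hence $\OPT_\infty(\mathbf{x})\ge(1-\mu)\|f\|_\infty$.
\item Summing $\|x_i\|_\infty\le\|x_i-f\|_\infty+\|f\|_\infty$ gives $\SC_\infty(\mathbf{x},0)\le\OPT_\infty(\mathbf{x})+\|f\|_\infty$.
\item Combining the two gives the ratio bound $1+\frac{1}{1-\mu}$.
\end{itemize}
This is the paper's entire proof. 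Where $\|x_i\|_\infty$ is attained is irrelevant.

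The averaging over $J$ and the limiting orthant inequality are unnecessary. Moreover, your per-agent claim that $\|x_i\|_\infty\le\|x_i-f\|_\infty-1$ whenever $t_i=0$ is false. Take $f=(1,0)$ and $x_i=(0,-5)$: the claim reads $5\le4$. Your bookkeeping survives only because its last step bounds $A+B-C\le1$, which also holds with $+C$ in place of $-C$. That step is the simple argument above with $\pi$ a point mass.
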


\begin{proof}[Proof of the upper bounds in Theorem~\ref{thm:endpoints}]
The $\ell_1$ objective separates as $\SC_1(\mathbf{x},y)=\sum_{j=1}^d\sum_iw_i|x_{ij}-y_j|$.
A point is optimal exactly when each of its coordinates minimizes the corresponding one-dimensional sum.
For each coordinate, the endpoint cost comparison in Theorem~\ref{thm:planar} applies by embedding that one-dimensional instance on an axis in the plane.
For a correct prediction it gives $\sum_iw_i|x_{ij}-m_j|\le\sum_iw_i|x_{ij}-f_j|$, where $m$ is the mechanism output.
For an arbitrary prediction the same comparison has factor $\frac{1+c}{1-c}$.
Summing over coordinates gives the claimed upper bounds.

For $p=\infty$, translate the mechanism output to zero and reflect coordinates so that an optimum $f$ is nonnegative.
Suppose that, for every coordinate $j$ with $f_j>0$, the agents with $x_{ij}\le0$ have total real weight at least $r\in(0,1]$.
If $f=0$, the mechanism is already optimal, so assume $\|f\|_\infty>0$.
Choose a coordinate $j$ with $f_j=\|f\|_\infty$.
Every agent with $x_{ij}\le0$ has $\|x_i-f\|_\infty\ge|x_{ij}-f_j|\ge f_j$.
Since these agents have total real weight at least $r$, we obtain
\begin{align*}
 \SC_\infty(\mathbf{x},f)\ge\sum_{i:x_{ij}\le0}w_i|x_{ij}-f_j|
 \ge r f_j=r\|f\|_\infty.
\end{align*}
The triangle inequality gives $\|x_i\|_\infty\le\|x_i-f\|_\infty+\|f\|_\infty$ for every agent.
Summing with weights and using $\sum_iw_i=1$ therefore yields
\begin{align*}
 \SC_\infty(\mathbf{x},0)\le\SC_\infty(\mathbf{x},f)+\|f\|_\infty
 \le\left(1+\frac{1}{r}\right)\SC_\infty(\mathbf{x},f).
\end{align*}
Under a correct prediction, the augmented median condition gives $r=\frac{1+c}{2}$; under an arbitrary prediction it gives $r=\frac{1-c}{2}$.
This proves \eqref{eq:infty-upper}.
\end{proof}

\subsection{Asymptotic Tightness}

We next prove that the bounds in Theorem~\ref{thm:high-upper} cannot be improved uniformly over the dimension.

\begin{theorem}\label{thm:high-tight}
For every fixed $1<p<+\infty$ and $c\in[0,1)$,
\begin{align}
 \lim_{d\to\infty}\alpha_{p,d}(c)=\Gamma_p\left(\frac{1-c}{2}\right),
 \qquad
 \lim_{d\to\infty}\beta_{p,d}(c)=\Gamma_p\left(\frac{1+c}{2}\right).
 \label{eq:high-tight}
\end{align}
\end{theorem}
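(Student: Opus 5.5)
The upper bound $\limsup_{d\to\infty}\le$ in \eqref{eq:high-tight} follows directly from Theorem~\ref{thm:high-upper}, since the bounds there do not depend on $d$ (and $\alpha_{p,d},\beta_{p,d}$ are nondecreasing in $d$ by embedding). So the work is the matching lower bound. For fixed $\mu\in\{\frac{1-c}{2},\frac{1+c}{2}\}$, I will build, for each large $d$, a permutation-symmetric weighted instance whose ratio tends to $\Gamma_p(\mu)$. I will then approximate it by ordinary finite profiles exactly as in the tightness proof of Theorem~\ref{thm:planar}.

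The construction is guided by the equality cases in the proof of Theorem~\ref{thm:high-upper}. Lemma~\ref{lem:convex-envelope} is tight for a two-point distribution of $z_i$ on $\{a,1\}$ with mean $\mu$, where $a=a_p^*(\mu)$ is given by Proposition~\ref{prop:maximizer}. Lemma~\ref{lem:orthant} is tight when each agent is zero off $S_i$ and parallel to $f$ on $S_i$. Accordingly:
\begin{itemize}
\item Take $f=d^{-1/p}\one$ and let $k=\lfloor ad\rfloor$.
\item Type A (total mass $\frac{1-\mu}{1-a}$): for every $k$-subset $S\subseteq[d]$, put equal mass at the point with coordinates $\sigma d^{-1/p}$ on $S$ and $0$ off $S$, where $\sigma>0$ is a free parameter.
\item Type B (mass $\frac{\mu-a}{1-a}$): put this mass at $f$ itself.
\end{itemize}
In every coordinate the real mass that is strictly positive is $\frac{1-\mu}{1-a}\cdot\frac{k}{d}+\frac{\mu-a}{1-a}\le\mu$, with equality in the limit. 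The nonpositive side therefore carries at least $1-\mu$. With the prediction at $0$ (robustness, $\mu=\frac{1+c}{2}$) or at $f$ (consistency, $\mu=\frac{1-c}{2}$), the augmented median condition puts the lower median at $0$ in each coordinate, exactly as in the planar tightness argument. The costs are essentially dimension-free: a type A agent has cost $\sigma(k/d)^{1/p}\to\sigma a^{1/p}$ at the origin and $\bigl(\tfrac kd|\sigma-1|^p+1-\tfrac kd\bigr)^{1/p}$ at $f$. A type B agent has cost $1$ at the origin and $0$ at $f$.

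Next I must show $f$ is optimal. The social cost is convex and invariant under coordinate permutations, so averaging any minimizer over all permutations gives a minimizer on the diagonal. It therefore suffices to minimize over $y=t\one$. That is a one-variable convex problem, which I will handle by a first-order condition at $t=d^{-1/p}$. In that condition the type B mass contributes a subgradient of magnitude up to $\frac{\mu-a}{1-a}$ along the diagonal, playing the role of the vertical group in Lemma~\ref{lem:planar-lower}. I expect this step to be the main obstacle. I need to choose $\sigma$ so that two things happen together: the diagonal subgradient condition holds, and the Hölder step in Lemma~\ref{lem:orthant} is tight, meaning $(T,b)$ is proportional to $(\lambda^{p-1},K_\lambda^{p-1})$ with $K_\lambda/\lambda=\Delta_p(\mu)$. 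My first attempt will be to pick $\sigma$ from the Hölder equality. I would then verify via the stationarity equation \eqref{eq:stationary} that the resulting diagonal derivative of the type A cost equals $-$ the type B mass. This mirrors how $x=r^{-1/(p-1)}$ was forced in Lemma~\ref{lem:planar-lower}. If exact optimality fails for finite $d$ because $k\ne ad$, I will let $f$ be the true diagonal optimum $t^*\one$ and show $t^*\to d^{-1/p}$ after rescaling, so the ratio changes by $o(1)$.

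Finally, I will compute the limiting ratio
\[
\frac{\frac{1-\mu}{1-a}\sigma a^{1/p}+\frac{\mu-a}{1-a}}{\frac{1-\mu}{1-a}\bigl(a|\sigma-1|^p+1-a\bigr)^{1/p}}.
\]
I will check, using the Hölder-equality choice of $\sigma$, that it equals $\bigl(1+R_{p,\mu}(a)^q\bigr)^{1/q}=\Gamma_p(\mu)$. The edge case $a=\mu$ (no type B mass) needs separate care: there the optimality condition comes purely from type A. The limit $d\to\infty$ is used only to make $k/d\to a$ and to absorb rounding, and rational approximation of the weights then yields ordinary profiles.
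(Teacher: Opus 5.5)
Your construction is the paper's: sparse points $\sigma\one_S$ over all $ad$-subsets carrying mass $\frac{1-\mu}{1-a}$, mass $\frac{\mu-a}{1-a}$ at $f$, and a prediction that keeps every coordinate median at $0$. Your robustness prediction at $0$ works as well as the paper's $-f$. Your optimality check (average a minimizer over coordinate permutations, then test a one-variable subgradient condition on the diagonal) is a valid repackaging of the paper's argument. The paper instead shows that the gradient $g$ of the sparse-point cost at $f$ is a multiple of $\one$ with $\|g\|_q=\lambda\frac{\mu-a}{1-a}$, and then uses convexity plus H\"older.

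Two computations you anticipate are wrong as stated, and following them literally would derail the verification. First, equality in H\"older for $(T,b)$ and $(\lambda,K_\lambda)$ means $(T^p,b^p)\propto(\lambda^q,K_\lambda^q)$, i.e.\ $(T,b)\propto\bigl(\lambda^{1/(p-1)},K_\lambda^{1/(p-1)}\bigr)$, not exponent $p-1$. With $T=(\sigma-1)a^{1/p}$ and $b=(1-a)^{1/p}$ this gives $\sigma=1+\bigl(\frac{1-a}{a}\bigr)^{1/p}\Delta_p(\mu)^{-1/(p-1)}$, which is exactly the paper's $t$. Any other $\sigma$ makes your final ratio strictly smaller than $\Gamma_p(\mu)$. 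Second, with this $\sigma$ the derivative of the type A cost along $s\mapsto sf$ at $s=1$ equals $\lambda\frac{\mu-a}{1-a}$, where $\lambda=\bigl(1+\Delta_p(\mu)^q\bigr)^{-1/q}<1$. It is not the full type B mass, so the subgradient condition holds with slack. That fact uses only $\delta=R_{p,\mu}(a)$, not the stationarity equation \eqref{eq:stationary}, and it covers $a=\mu$ with no separate argument. If you insist on the exact equality you expect, you will be pushed off the H\"older-equality $\sigma$ and lose the ratio.

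Your fallback for $k\ne ad$ is heavier than needed, and there are two simpler routes. The paper's route rounds both parameters, $a_d=\lfloor ad\rfloor/d$ and $\mu_d=\lfloor \mu d\rfloor/d$, and sets $\delta_d=R_{p,\mu_d}(a_d)$. Then optimality of $f$ and $\SC_p(\mathbf{x},f)=\lambda_d\SC_p(\mathbf{x},0)$ hold exactly in every dimension, with rational weights. The alternative avoids optimality entirely:
\begin{itemize}
\item In the consistency construction the lower median is $0$ for every prediction location: there is no negative real mass, the zero mass is at least $\frac{1+c}{2}$, and $c<\frac{1+c}{2}$. So any optimal point may serve as the prediction.
\item In both constructions the ratio is therefore at least $\SC_p(\mathbf{x},0)/\SC_p(\mathbf{x},f)$, simply because $\OPT_p(\mathbf{x})\le\SC_p(\mathbf{x},f)$.
\end{itemize}
With this, you need neither exact optimality of $f$ nor convergence of $t^*$. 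Continuity in $k/d$ and in the weights gives the limit; when rationalizing, round the mass at $f$ down so that the positive mass per coordinate stays at most $\mu$.
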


\begin{proof}
Fix $\mu\in(0,1)$ and choose a maximizer $a$ of $R_{p,\mu}$.
Proposition~\ref{prop:maximizer} shows that $0<a\le\mu<1$, so the denominators below are nonzero.
We first assume that $ad$ is an integer.
In fact, the construction and the optimality calculation work for any $0<a\le\mu$ if we set $\delta=R_{p,\mu}(a)$; this observation will let us approximate $a$ and $\mu$ by rational numbers at the end.
Let $\delta=\Delta_p(\mu)$, $\lambda=\bigl(1+\delta^{q}\bigr)^{-\frac{1}{q}}$, and
\begin{align}
 \theta=\frac{1-\mu}{1-a},\qquad
 \phi=\frac{\mu-a}{1-a},\qquad
 t=1+\left(\frac{1-a}{a}\right)^{\frac{1}{p}}\delta^{-\frac{1}{p-1}}.
 \label{eq:lower-parameters}
\end{align}
We have $\theta>0$, $\phi\ge0$, and $t>1$.
If $\phi=0$, no agents are placed at $f$.
Let $\one_d=(1,\ldots,1)\in\R^d$ and, for $S\subseteq[d]$, let $\one_S\in\R^d$ have coordinate one on $S$ and zero elsewhere.
Set $f=\one_d$.
Here we do not normalize $\|f\|_p$ to one; the common scale factor $d^{\frac{1}{p}}$ cancels from the cost ratio.
Place total mass $\theta$ uniformly over the points $t\one_S$, where $S$ ranges over all $ad$-element subsets of $[d]$, and place mass $\phi$ at $f$.
A fixed coordinate belongs to $\binom{d-1}{ad-1}$ of the $\binom{d}{ad}$ subsets, so the fraction of sparse points positive in that coordinate is $\frac{ad}{d}=a$.
The parameter identities are
\begin{align*}
 \theta+\phi=\frac{1-\mu+\mu-a}{1-a}=1,
 \qquad
 \theta a+\phi=\frac{a(1-\mu)+\mu-a}{1-a}=\mu.
\end{align*}
Thus every coordinate has positive real mass $\mu$ and zero real mass $1-\mu$.

For consistency, take $\mu=\frac{1-c}{2}$ and prediction $f$.
The augmented positive mass is $\mu+c=\frac{1+c}{2}$, exactly half of the augmented total mass.
There is no negative mass, and the zero mass $1-\mu=\frac{1+c}{2}$ reaches the median threshold, so the lower median is zero.
For robustness, take $\mu=\frac{1+c}{2}$ and prediction $-f$.
In every coordinate, the negative mass is $c<\frac{1+c}{2}$, and the mass at nonpositive coordinates is $c+1-\mu=\frac{1+c}{2}$.
Thus the threshold is first reached at zero.
This also covers $c=0$, when there is no negative prediction weight.
Hence $\CMP_c$ returns the origin in both constructions.

It remains to show that $f$ is an optimal facility.
Write $r=t-1$.
Let $D=(ar^p+1-a)^{\frac{1}{p}}$, so every sparse point is at distance $d^{\frac{1}{p}}D$ from $f$.
For the function $y\mapsto\|y-t\one_S\|_p$, its $j$-th derivative at $f$ is $-\frac{r^{p-1}}{d^{1/q}D^{p-1}}$ if $j\in S$, and $\frac{1}{d^{1/q}D^{p-1}}$ otherwise.
Averaging over the subsets gives the same derivative in every coordinate, namely $\frac{(1-a)-ar^{p-1}}{d^{1/q}D^{p-1}}$.

The following calculation shows that this averaged derivative is nonnegative.
The choice of $r=t-1$ gives
\begin{align*}
 r^{p-1}=\left(\frac{1-a}{a}\right)^{\frac{1}{q}}\frac{1}{\delta},
 \qquad
 D=(1-a)^{\frac{1}{p}}(1+\delta^{-q})^{\frac{1}{p}},
 \qquad
 (1+\delta^{-q})^{-\frac{1}{q}}=\lambda\delta.
\end{align*}
Using these identities, we obtain
\begin{align*}
 \frac{(1-a)-ar^{p-1}}{D^{p-1}}
 =\lambda\bigl(\delta(1-a)^{\frac{1}{p}}-a^{\frac{1}{p}}\bigr)
 =\lambda\frac{\mu-a}{1-\mu}
 =\lambda\frac{\phi}{\theta}\ge0.
\end{align*}
The second equality follows by multiplying $\delta=R_{p,\mu}(a)$ by $(1-a)^{\frac{1}{p}}$.
Thus, if $g$ denotes the gradient of the total sparse-point cost $C_H$ at $f$, every coordinate of $g$ equals $\frac{\lambda\phi}{d^{1/q}}$ and $\|g\|_q=\lambda\phi\le\phi$.

As in the planar construction, convexity gives $C_H(y)\ge C_H(f)+\langle g,y-f\rangle$.
H\"older's inequality gives $\langle g,y-f\rangle\ge-\|g\|_q\|y-f\|_p$, so
\begin{align*}
 \SC_p(\mathbf{x},y)=C_H(y)+\phi\|y-f\|_p
 \ge C_H(f)+(\phi-\|g\|_q)\|y-f\|_p
 \ge C_H(f)=\SC_p(\mathbf{x},f).
\end{align*}
This proves global optimality, including $\phi=0$, when $g=0$.

We next compute the ratio.
For a sparse point, the distance to the origin is $t(ad)^{\frac{1}{p}}$, and the distance to $f$ is $d^{\frac{1}{p}}D$.
Put $A=a^{\frac{1}{p}}$ and $B=(1-a)^{\frac{1}{p}}$, so that $rA=B\delta^{-\frac{1}{p-1}}$.
Using $(1+\delta^{-q})^{-\frac{1}{q}}=\lambda\delta$ and $1-q=-\frac{1}{p-1}$ gives
\begin{align*}
 (1+\delta^{-q})^{\frac{1}{p}}
 =(1+\delta^{-q})(1+\delta^{-q})^{-\frac{1}{q}}
 =\lambda\delta(1+\delta^{-q})
 =\lambda\delta+\lambda\delta^{-\frac{1}{p-1}}.
\end{align*}
Consequently,
\begin{align*}
 D-\lambda tA
 =B(1+\delta^{-q})^{\frac{1}{p}}-\lambda A-\lambda B\delta^{-\frac{1}{p-1}}
 =\lambda(\delta B-A)=\lambda\frac{\phi}{\theta}.
\end{align*}
Multiplying by $d^{\frac{1}{p}}=\|f\|_p$, each sparse point therefore satisfies
\begin{align*}
 \|t\one_S-f\|_p-\lambda\|t\one_S\|_p
 =\lambda\frac{\phi}{\theta}\|f\|_p,
\end{align*}
whereas the point $f$ satisfies $\|f-f\|_p-\lambda\|f\|_p=-\lambda\|f\|_p$.
The sparse group contributes $\theta\lambda\frac{\phi}{\theta}\|f\|_p$ to the cost difference, and the group at $f$ contributes $-\phi\lambda\|f\|_p$.
They cancel, yielding
\begin{align*}
 \SC_p(\mathbf{x},f)=\lambda\SC_p(\mathbf{x},0),
 \qquad
 \frac{\SC_p(\mathbf{x},0)}{\OPT_p(\mathbf{x})}=\Gamma_p(\mu).
\end{align*}

For every sufficiently large dimension $d$, set $a_d=\frac{\lfloor da\rfloor}{d}$ and $\mu_d=\frac{\lfloor d\mu\rfloor}{d}$.
Then $0<a_d\le\mu_d<1$, $a_d d$ is an integer, and $(a_d,\mu_d)\to(a,\mu)$.
Use these parameters in \eqref{eq:lower-parameters}, now with $\delta_d=R_{p,\mu_d}(a_d)$ and $\lambda_d=(1+\delta_d^q)^{-\frac{1}{q}}$.
The optimality and cost calculations above use only the identity $\delta_d=R_{p,\mu_d}(a_d)$, not that $a_d$ is a maximizer, so they hold exactly for each such $d$.

Keep the original confidence parameter $c$ and the same predictions.
Since $\mu_d\le\mu$, in the consistency construction the zero mass is $1-\mu_d\ge\frac{1+c}{2}$.
In the robustness construction the negative mass is still $c<\frac{1+c}{2}$ and the nonpositive mass is $c+1-\mu_d\ge\frac{1+c}{2}$.
Thus the selected median remains zero in both cases.
Moreover, $\theta_d=\frac{1-\mu_d}{1-a_d}$ and $\phi_d=\frac{\mu_d-a_d}{1-a_d}$ are rational.
Each sparse point has rational weight $\frac{\theta_d}{\binom{d}{a_d d}}$, so choosing a common denominator realizes the construction by ordinary agents exactly.

Finally, $R_{p,\mu}(a)$ is continuous for $0<a\le\mu<1$, and hence
\begin{align*}
 \frac{\SC_p(\mathbf{x}^{(d)},0)}{\OPT_p(\mathbf{x}^{(d)})}
 =(1+\delta_d^q)^{\frac{1}{q}}
 \longrightarrow(1+R_{p,\mu}(a)^q)^{\frac{1}{q}}=\Gamma_p(\mu).
\end{align*}
There is such a profile in every sufficiently large dimension, not just a subsequence.
Substituting the consistency and robustness values of $\mu$ and combining these lower bounds with Theorem~\ref{thm:high-upper} proves \eqref{eq:high-tight}.
\end{proof}

We finish by proving tightness for the endpoint norms.

\begin{proof}[Proof of tightness in Theorem~\ref{thm:endpoints}]
For $p=1$, the matching bounds already follow from the one-dimensional constructions.
The two-location $\ell_1$ construction in the planar tightness proof lies on one axis and can be embedded in any dimension, so its finite-profile approximations give the matching robustness lower bound.
Consistency cannot be less than one by the definition of the optimum.

For $p=\infty$, fix $\mu\in(0,1)$ and choose $d\ge2$ large enough that $d\mu\ge1$; let $f=\one_d$.
Let $e_j$ be the vector with coordinate one at $j$ and zero elsewhere.
Put total mass $\theta_d=\frac{1-\mu}{1-1/d}$ uniformly at $2e_1,\ldots,2e_d$ and the remaining mass $\phi_d=\frac{d\mu-1}{d-1}$ at $f$.
The positive mass in every coordinate is $\mu$.
For every $y$ and pair $j<k$, the triangle inequality gives $\|y-2e_j\|_\infty+\|y-2e_k\|_\infty\ge\|2e_j-2e_k\|_\infty=2$.
Each distance occurs in exactly $d-1$ pairs, so
\begin{align*}
 (d-1)\sum_{j=1}^d\|y-2e_j\|_\infty\ge2\binom d2=d(d-1),
 \qquad \frac{1}{d}\sum_{j=1}^d\|y-2e_j\|_\infty\ge1.
\end{align*}
At $f$, all these distances equal one and the group at $f$ pays zero.
Hence $\OPT_\infty(\mathbf{x})=\theta_d$, whereas $\SC_\infty(\mathbf{x},0)=2\theta_d+\phi_d=1+\theta_d$.
The ratio is therefore
\begin{align*}
 \frac{\SC_\infty(\mathbf{x},0)}{\OPT_\infty(\mathbf{x})}
 =1+\frac{1}{\theta_d}=1+\frac{1-1/d}{1-\mu}
 \longrightarrow1+\frac{1}{1-\mu}.
\end{align*}
Take $\mu=\frac{1-c}{2}$ with prediction $f$ or $\mu=\frac{1+c}{2}$ with prediction $-f$.
The positive real mass in each coordinate is $\theta_d/d+\phi_d=\mu$, so the same cumulative-weight calculations as in Theorem~\ref{thm:high-tight} show that the output is zero.
For ordinary finite profiles, replace $\mu$ by $\mu_d=\frac{\lfloor d\mu\rfloor}{d}$.
For sufficiently large $d$ we have $\mu_d\ge\frac{1}{d}$, so both masses are nonnegative and rational.
The positive mass only decreases, preserving the selected median at zero, and the preceding optimality proof still applies.
Since $\mu_d\to\mu$, the ratios have the same limits.
These limits are $\frac{3+c}{1+c}$ and $\frac{3-c}{1-c}$, respectively, and together with \eqref{eq:infty-upper} prove asymptotic tightness.
\end{proof}

At $c=0$, Theorems~\ref{thm:high-upper}, \ref{thm:endpoints}, and \ref{thm:high-tight} recover the prediction-free dimension-independent bounds, including $\Gamma_p\bigl(\frac{1}{2}\bigr)$ for $1<p<+\infty$ and the limiting factor $3$ for $p=\infty$ \cite{gravin2025approximation}.
The planar values can be strictly smaller because the scalar relaxation in Lemma~\ref{lem:convex-envelope} does not retain all the relations between coordinate-sign groups used in the planar proof.

\section{Conclusion}\label{sec:conclusion}

We studied the coordinate-wise median with predictions for utilitarian facility location in $\ell_p$ spaces.
In two dimensions, we established exact consistency and robustness bounds for every $p\in[1,+\infty]$, extending the Euclidean guarantees of Agrawal et al.\ \cite{agrawal2022learning}.
The proof combines the two coordinate-median constraints with distance inequalities for four sign groups, and matching instances establish tightness.
In high dimensions, we obtained dimension-independent bounds through a scalar mean constraint and constructed symmetric instances whose ratios approach these bounds as the dimension grows.
These results extend the prediction-free analysis of Gravin and Jia \cite{gravin2025approximation} and recover their learning-augmented Euclidean bounds.
We also determined the exact guarantees for $p=1$ in every dimension and asymptotically tight bounds for $p=+\infty$.

A natural direction is to determine the exact consistency and robustness for each fixed dimension $d\ge3$ and $1<p\le+\infty$.
Our high-dimensional analysis summarizes the coordinate constraints by a single mean constraint, losing information about which sign patterns can occur together.
Retaining more of this information may yield sharper dimension-dependent bounds and clarify the gap between the planar and asymptotic guarantees.

\paragraph{AI Use Statement.}
We used OpenAI GPT-based tools, including ChatGPT and Codex, to assist in developing and checking intermediate lemmas, proofs, and calculations.
These tools played a substantial role in extending the planar reduction and the high-dimensional upper- and lower-bound analyses.
The authors independently checked all AI-assisted claims, proofs, and calculations and take full responsibility for their correctness and for the paper's content.

\bibliographystyle{plain}
\bibliography{mybibfile}

\end{document}